\newtheorem{theorem}{Theorem}
\newtheorem{proposition}{Proposition}
\newenvironment{proof}[1][Proof]{\noindent\textbf{#1.} }{\ \rule{0.5em}{0.5em}}
\newcommand{\A}{\mathcal{A}}
\newcommand{\transpose}{{\mbox{\tiny T}}}
\newcommand{\tA}{\widetilde{\mathcal{A}}}
\newcommand{\cA}{{\mathcal{A}}}
\newcommand{\cG}{{\mathcal{G}}}
\newcommand{\cV}{{\mathcal{V}}}
\newcommand{\cS}{{\mathcal{S}}}
\newcommand{\cP}{{\mathcal{P}}}
\newcommand{\cT}{{\mathcal{T}}}
\newcommand{\bE}{{\mathbb{E}}}
\newcommand{\bI}{{\textbf{I}}}
\newcommand{\bM}{{\textbf{M}}}
\newcommand{\bQ}{\textbf{Q}}
\newcommand{\bq}{\textbf{q}}
\newcommand{\bU}{\textbf{U}}
\newcommand{\bD}{\textbf{D}}
\newcommand{\bH}{\textbf{H}}
\newcommand{\bZ}{\textbf{Z}}
\newcommand{\bA}{\textbf{A}}
\newcommand{\bL}{\textbf{L}}
\newcommand{\ba}{\textbf{a}}
\newcommand{\bb}{\textbf{b}}
\newcommand{\bh}{\textbf{h}}
\newcommand{\bpii}{\pmb{\pi}}
\newcommand{\bGm}{{\pmb{\Gamma}}}
\newcommand{\bgm}{{\pmb{\gamma}}}
\newcommand{\btheta}{{\pmb{\theta}}}
\newcommand{\bpi}{\pmb{\Pi}}
\newif\ifnotes\notestrue
\def\htien#1{}
\def\<#1>{\textit{#1}}
\begin{document}
	
\newcolumntype{C}{>{\centering\arraybackslash}p{4em}}

\title{\textbf{Estimation of Recursive Route Choice Models with Incomplete Trip Observations}}
\author[1]{Tien Mai}
\author[1]{The Viet Bui}
\author[2]{Quoc Phong Nguyen}
\author[3]{Tho V. Le}
\affil[1]{\it\small School of Computing and Information Systems, Singapore Management University }
	\affil[2]{\it\small School of Computing, National University of Singapore }
		\affil[3]{\it\small Department of Civil \& Mineral Engineering, University of Toronto }

\maketitle

\begin{abstract}
This work concerns the estimation  of recursive  route choice models in the situation that the trip observations are incomplete, i.e.,  there are  unconnected links (or nodes) in the observations.  A direct approach to handle this  issue would be intractable because enumerating all paths between  unconnected links (or nodes) in a real network is typically not possible. 
We exploit an \textit{expectation-maximization} (EM) method that allows to deal with the missing-data issue by  alternatively performing two steps of  sampling the missing segments in the observations and solving maximum likelihood estimation problems. 
Moreover, observing that the EM method would be expensive,
we propose a new estimation method based on the idea that the choice probabilities of unconnected link observations can be exactly computed by solving systems of linear equations. We further design a new algorithm, called as \textit{decomposition-composition} (DC), that helps  reduce the number of systems of linear equations to be solved and speed up the estimation. We compare our proposed algorithms with some standard baselines using a dataset from a real network, and show that  the DC algorithm outperforms the other approaches in recovering missing information in the observations. Our methods work with most of the recursive route choice models proposed in the literature, including the recursive logit, nested recursive logit, or discounted recursive models.


\end{abstract}

{\bf Keywords:}  
Incomplete observation, recursive logit, nested recursive logit, expectation-maximization, decomposition-composition.

\section{Introduction}
\label{sec:intro}

Travel demand modeling plays a vital role in transportation planning, operations, and management. In travel demand modeling, route choice is a process that associates the behaviors of selecting a route to travel from an origin to a destination. Such a route choice model can be used to assess  travelers’ preferences regarding characteristics of different routes and travellers (e.g. travel time, travel cost, number of crossings, or age, gender, and trip purpose of the travellers). Route choice models can also be used to predict traffic flows and can have  applications in  traffic simulation \citep{osorio2015computationally} or network pricing \citep{zimmermann2021strategic}. 
The parameters of a route choice model are often estimated using path observations collected by, for instance, Global Positioning System (GPS) devices equipped on vehicles. It is commonly known that such trip observations would be incomplete due to several reasons, for example, the GPS signals are intermittent or some parts of the GPS data are censored from the analyst due to privacy concerns. This missing data issue is critical for the estimation of route choice models  but to the best of our knowledge,  this issue has not been properly investigated in the context of route choice modeling. We address the issue in this paper.  

We focus on link-based recursive route choice models  \citep{FosgFrejKarl13,MaiFosFre15} due to their several advantages, namely,  the estimation does not require sampling of route choice alternatives and  the models can be consistently estimated and  easy to predict. Such a recursive model is based on the Rust's dynamic discrete choice framework \citep{RUST1987} where a path choice is modelled as a sequence of link choices and the parameters  are estimated by combining maximum likelihood estimation (MLE) and dynamic programming \citep[see][ for a tutorial]{zimmermann2020tutorial}. A naive approach to deal with the missing-data issue under recursive  models is to ignore the unconnected link observations  and solve the MLE problem only based on connected ones. This approach can be done straightforwardly using  existing estimation algorithms \citep{FosgFrejKarl13,MaiFosFre15,MaiBasFre15_DeC}. However, unconnected segments in the observations would contain valuable information about traveller route choice behavior  and would be important for the  estimation. In this paper, we aim at developing new estimation algorithms that allow us to efficiently and practically recover information from these incomplete segments. 

\noindent\textbf{Our contributions:} We make a number of contributions in this work. We first exploit the  \textit{expectation-maximization} (EM) method to deal with the missing-data issue. This method has been widely used to estimate statistical models with unobserved variables  \citep{dempster1977maximum,mclachlan2019finite} and we show how this scheme can be used  in the context of route choice modeling under recursive models. More precisely, we design an EM algorithm that alternatively performs two steps where the first step is to compute an expected log-likelihood function  by sampling paths to fill unconnected segments in the observations, and the second step maximizes that expected log-likelihood function. We further show that, under the recursive logit  model \citep{FosgFrejKarl13}, the expected log-likelihood function is concave, thus the second step can be done efficiently by convex optimization.     

Moreover, observing that the EM algorithm would be expensive to perform as  it requires solving several MLE problems, we further propose a new method to directly compute the probabilities of unconnected path segments. That is, we show that the probability of unconnected links can be obtained by solving one \textit{system of linear equations}. This result allows us to compute the likelihood function of the incomplete observations where the missing segments are explicitly taken into consideration. Furthermore, to reduce the number of systems of linear equations to be solved  and speed up the estimation process,  we develop  a new algorithm, called as the \textit{decomposition-composition} (DC) method. The idea is to \textit{decompose} each linear system above into two parts and \textit{compose} the common parts to reduce the number of linear systems to be solved. The main advantage of our DC algorithm is that we only need to solve \textit{only one system of linear equations} to obtain all the choice probabilities of unconnected segments on trips that share the same destination.  We also show that such a system of linear questions always has \textit{a unique solution}. 

We test our proposed algorithms with two popular recursive route choice models, i.e., the recursive logit (RL) \citep{FosgFrejKarl13} and nested recursive logit (NRL) models, using a dataset collected from a real network. Our numerical experiments clearly show that our DC algorithm outperforms the EM and other baseline approaches in recovering missing information from observations. Our numerical results are based on the RL and NRL models, but our method can be applied to other existing recursive models in the literature, e.g., the recursive network multivariate extreme value \citep{Mai_RNMEV}, discounted RL \citep{oyama2017discounted}, or stochastic time-dependent RL \citep{mai2021RL_STD}. 

\noindent \textbf{Literature reviews:}
The literature on route choice modeling covers path-based and link-based recursive models.  Path-based models \citep[see][for a review]{Prat09} rely on sampling of paths, thus the parameter estimates would depend on the sampling methods. Even when corrections are added to the choice probabilities to achieve consistent estimates, the prediction is not easy to perform. In contrast, the link-based recursive models are built based on the dynamic discrete choice framework \citep{RUST1987} and they are basically equivalent to discrete choice models based on the sets of all possible paths. Link-based recursive models have several advantages, making them attractive recently, for instance, the models can be consistently estimated and can be quickly and easily used for prediction.  The first recursive model (i.e. the RL model) is proposed by \cite{FosgFrejKarl13}. Several recursive models have been developed afterwards to deal with, for instance,  correlations between  path utilities \citep{MaiFosFre15,Mai_RNMEV,MaiBasFre15_DeC}, dynamic networks \citep{de2020RL_dynamic}, stochastic time-dependent networks \citep{mai2021RL_STD}, or discounted behavior \citep{oyama2017discounted}, with applications in, for instance, traffic management \citep{BailComi08,Melo12} or network pricing \citep{zimmermann2021strategic}. We refer the reader to  \cite{zimmermann2020tutorial} for a comprehensive review.  

With the development of technology and the ease of accessing the Internet, onboard devices, e.g., GPS, are more commonly used for several purposes, including rote navigation for taxis, buses, trucks, passenger cars, utility vehicles, mobile phones, etc. From operations and management perspectives, GPS data can be utilized to understand route choice behaviors. In fact, GPS has been used for collecting data for route choice modelling from a very early time \citep{wolf1999accuracy}. Some other early studies used GPS data for modelling the route choice of vehicles \citep{gilani2005automatically}, cyclists and pedestrians \citep{krizek2007detailed,dill2008understanding,menghini2010route,broach2012cyclists}, transit \citep{oliveira2010improving}, and commuting \citep{papinski2009exploring}. While collecting data by GPS brings many advantages, it is frequent that GPS data is missing or incomplete, leading to difficulties for analysis \citep{wang2011challenges,shen2014review}. Consequently, researchers have developed various approaches to overcome this issue. For example,  \cite{smith2003exploring} introduce a preliminary analysis of techniques for imputing missing data. The results revealed that the statistical technique outperformed the heuristics technique in terms of accuracy and performance.  \cite{melville2004experiments} used an ensemble method which is available under the DECORATE package and found DECORATE is more robust to missing data than the other two methods. Some researchers combined map-matching \citep{BierFrej08} or questionnaire surveys \citep{Ramos2015thesis} which provide ground truth and improve the quality of the GPS data.  We refer the reader to \cite{shen2014review} for a review of GPS travel surveys, applications, and data processing. In the context of route choice modeling, to deal with the missing-data issue, to the best of our knowledge, existing studies only focus on  data collection, i.e., improving GPS data and the map-matching, and we seem to be the first to take the estimation of route choice models with missing data into consideration.

Our work also closely relates to the EM framework \citep{dempster1977maximum,gentle1998algorithm,arcidiacono2011conditional}, which has been popularly used for estimating statistical models with latent variables or unknown data observations.  It should be noted that an EM algorithm  
does increase the likelihood value of the  observed data after each iteration,  but there is no guarantee  that the sequence converges to a maximum likelihood estimator \citep{gentle1998algorithm}. In addition, the method requires  repeatedly solving MLE problems, which would be expensive, especially in the context of recursive route choice models since the estimation of these models involves solving several dynamic programs \citep{MaiFrejinger22}.

\noindent \textbf{Paper outline:} This paper is organized as follows. Section \ref{sec:RL models} presents the recursive models  and the missing-data issue. Section \ref{sec:EM}  presents our EM  and Section \ref{sec:DC} presents  the DC  algorithms. Section \ref{sec:experiments} reports numerical experiments, and finally, Section \ref{sec:concl} concludes the study and provides directions for future research.

\noindent \textbf{Notations:}
Boldface characters represent matrices (or vectors), and $a_i$ denotes the $i$-th element of vector $\ba$, $A_{i,j}$ denotes the $(i,j)$-th element of matrix $\bA$, and  $\bA_{:,j}$ denotes the $j$-th column of matrix $\bA$. 
 We use $[m]$, for any $m\in \mathbb{N}_+$, to denote the set $\{1,\ldots,m\}$.

\section{Recursive Route Choice Models and the Missing Data Issue}
\label{sec:RL models}
Consider a network $\cG = (\cA,\cV)$, where $\cA$ is the set of links  and $\cV$ is the set of nodes.
Moreover, an absorbing state is associated with each destination by extending the network
with dummy links $d$. The set of all links becomes  $\tA = \A \cup \{d\}$.
For each link $k\in \cA$, let $A(k)$ be the set of outgoing links from $k$. For two links  $k,a\in \tA$, 
the instantaneous utility $u(a|k) = v(a|k)+ \epsilon(a|k)$, where  $v(a|k)$ is the
deterministic part of the utility and $\epsilon(a)$ is a random term with zero mean. We assume that  $v(a|k)$,
for all $a\in A(k)$ to be negative for all links except the dummy link $d$.

For any link $k\in \A$, let  $V(k)$ be the expected maximum utility (i.e., the value function) from the sink node of $k$ to the destination. In a recursive route choice model \citep{FosgFrejKarl13,MaiFosFre15}, a  choice of route  is modelled as a sequence of link choices, in which at each link, the traveller selects an outgoing link by maximizing the sum of link utility and the expected maximum utility from that link to the destination. The selection of the next link can be written as in the following form
$a = \text{argmax}_{a'\in A(k)} \left\{ v(a'|k)+ V(a') + \epsilon(a'|k)\right\}$. The value function $V(k)$ thus can be computed recursively  as 
\begin{equation}
    V(k) =
    \begin{cases}
    \bE_\epsilon\left[\max_{a\in A(k)} \left(v(a|k)+V(a) + \epsilon(a|k)\right)\right] & \text{ if }k\in\cA \backslash  \{d\}\\
    1 & \text{ if } k=d
    \end{cases}
\end{equation}
Here, we note that the value function should depend on the destination $d$ but we omit the corresponding indices for notational simplicity. Different assumptions can be made for the random terms $\epsilon(a|k)$. If we assume that $\epsilon(a|k)$, $\forall a\in A(k)$,  are i.i.d Extreme Value Type I with scale $\mu>0$ and independent of everything in the network, the recursive model becomes the  recursive logit (RL) \citep{FosgFrejKarl13} in which the value function can be computed conveniently as
\begin{equation}\label{eq:bellman-RL}
V(k) =
\begin{cases}
\mu\ln\left(\sum_{a\in A(k)} \exp\left(\frac{1}{\mu}\left(v(a|k)+ V(a)\right)\right)\right)&\forall  k \in \cA\backslash \{d\}\\
1&\text{ if } k=d.
\end{cases}
\end{equation}
The above recursive formulation has an appealing property that it can be written as a system of linear equations. More precisely, if we define a vector $\bZ$ of size $|\tA|$ with elements $Z_k = \exp(V(k)/\mu)$, for all $k\in \tA$, a matrix $\bM$ of size $(|\tA|\times|\tA|)$ with elements $M_{ka} = \mathbb{I}(a\in A(k))\exp\left(v(a|k)/\mu\right)$, and $\bb$ is a vector of size $|\tA|$ with zero elements everywhere except $b_d=1$, then \eqref{eq:bellman-RL} can  be reformulated as a system of linear equations  as $\bZ = \bM\bZ +\bb$. Such a system can be solved quickly using an iterative algorithm implemented in some existing linear solvers \citep{MaiFosFre15,MaiBasFre15_DeC}.  

It is well-known that the  RL model retains the independence from irrelevance alternatives (IIA) property \citep{FosgFrejKarl13,MaiFosFre15}, which would not hold in some real-life applications. To relax this property, \cite{MaiFosFre15} propose the NRL model based on the assumption that  the random terms $\epsilon(a|k)$ have different scale parameters over links, i.e., $\epsilon(a|k) = \mu_k\epsilon'(a|k)$, $\forall k\in\cA$, where $\epsilon'(a|k)$ are i.i.d extreme value type I for all $k,a\in\tA, a\in A(k)$. The value function in the NRL model  can be defined recursively as  
\begin{equation}\label{eq:bellman-NRL}
V(k) =
\begin{cases}
\mu_k\ln\left(\sum_{a\in A(k)} \exp\left(\frac{1}{\mu_k}\left(v(a|k)+ V(a)\right)\right)\right)&\forall  k \in \cA\backslash \{d\}\\
1&\text{ if } k=d.
\end{cases}
\end{equation}
Then, by defining matrix $\bM$ of size  $(|\tA|\times|\tA|)$ with entries $M_{ka} = \mathbb{I}(a\in A(k))\exp\left(v(a|k)/\mu_k\right)$, $\forall a,k\in\tA$, $\bZ$ of size $|\tA|$ with entries
$Z_k = \exp(v(k)/\mu_k)$, the Bellman equation leads to the  system of \textit{nonlinear equations}
$Z_k = \sum_{a\in A(k)} M_{ka}Z_a^{\phi_{ka}} +b_k$, $\forall k\in \cA$. This system can be solved by a value iteration or least-squares approach \citep{MaiFrejinger22}. 

Recursive route choice models aforementioned  can be estimated via MLE, which requires to compute   log-likelihood values of some path observations. Given a complete path $\sigma = \{k_0,\ldots,k_{\cT} = d\}$, the probability of choosing this path can be computed as 
\[
P(\sigma|\btheta) = \prod_{t=0}^{\cT-1} P(k_{t+1}|k_t;\btheta),
\]
where $\btheta$ is the model parameters to be estimated and $P(k_{t+1}|k_t;\btheta)$ is the probability of moving from link $k_t$ to $k_{t+1}$. The link choice probabilities $P(a|k)$, $\forall a\in A(k)$ can be further computed using  the value function $V(k)$, $\forall k\in \cA$. For instance, for the RL model, we can compute  $P(a|k)$ as
\[
P(a|k) = \exp\left(\frac{1}{\mu}\left(v(a|k) + V(a) - V(k)\right)\right),
\]
and for the NRL, the link choice probabilities can be computed as
\[
P(a|k) = \exp\left(\frac{1}{\mu_k}\left(v(a|k) + V(a) - V(k)\right)\right). 
\]
In  the case that  path observation $\sigma$ is not complete, i.e., there is at least one pair of successive link observations $(k_t,k_{t+1})$ such that $k_{t+t} \notin A(k_t)$, the computation of $P(\sigma)$ is much more challenging. To elaborate this point, let consider a link choice probability $P(a|k)$ where link $a$ is not connected to $k$, i.e., $a\notin A(k)$. Basically,  $P(a|k)$ is referred to the probability of reaching link $a$ from $k$ conditional on the current parameters and the destination $d$. This probability can be written as
\begin{equation}\label{eq:missinglinks-directP}
    P(a|k) = \sum_{\sigma \in \Omega^{ka}} P(\sigma),
\end{equation}
where $\Omega^{ka}$ is the set of all possible paths going from $k$ to $a$ and $P(\sigma)$ is the probability of (complete) path $\sigma$, which can be computed through link choice probabilities of connected links. Enumerating all paths from $k$ to $a$ is typically not feasible, in particular with real networks. So, a direct approach to deal with missing data is practical.
In the next sections, we present our ways to address this issue, i.e., one is based on the  EM scheme \citep{dempster1977maximum}, and the other one is a direct approach to compute the log-likelihood function of incomplete observations.   


\section{The Expectation-Maximization Algorithm}\label{sec:EM}
We explore the EM method \citep{dempster1977maximum,arcidiacono2011conditional} to handle the data missing issue. Such an EM algorithm will alternate  an expectation (E) step, which creates  a function for the expectation of the log-likelihood (LL) evaluated using the current estimate for the parameters $\btheta$, and a maximization (M) step that maximizes the expectation function created by the E step. In the following, we show how this method can be applied to estimate  recursive route choice models with incomplete observations.

We first denote by $\Phi = \{\sigma_1,\ldots,\sigma_N\}$ the set of $N$ observed paths. This set may contain some incomplete paths. We further denote by $\bGm = \{\Gamma_1,\ldots,\Gamma_N\}$ the random variables representing the hidden parts of the observations. Each  $\Gamma_n$ corresponds to  $\sigma_n$ such that if $\gamma_n$ is a realization of $\Gamma_n$, then   $(\sigma_n,\gamma_n)$ forms a complete path. Given a realization $\bgm = \{\gamma_1,\ldots,\gamma_N\}$ of $\bGm$, we define $L(\btheta; \Phi,\bgm)$ be the LL function of the complete observations $\Phi \cup \bgm$. To perform the two EM steps, we  define $R(\btheta|\bar{\btheta})$ as the expected value of the LL function of  the observations w.r.t the current conditional distribution of $\bGm$ given the observed set $\Phi$ and the current estimate $\bar{\btheta}$ of the model parameters $\btheta$. The function $R(\btheta|\bar{\btheta})$ has the following form
\begin{equation}\label{eq:R}
\begin{aligned}
 R(\btheta|\bar{\btheta}) &=
 \bE_{\bGm|\Phi,\bar{\btheta}}\left[L(\btheta; \Phi,\bGm)\right] \\
 &= \sum_{n\in[N]} \bE_{\Gamma_n|\sigma_n,\bar{\btheta}}\left[\ln P( \sigma_n,\Gamma_n|\btheta)\right]. 
\end{aligned}
\end{equation}
Since $(\sigma_n,\gamma_n)$ forms a complete path for all $n\in [N]$, $P( \sigma_n,\gamma_n|\btheta)$ can be written as a product of choice probabilities of connected links. To compute the expectation in \eqref{eq:R}, we consider $\sigma_n$ as a set of pairs of links and 
we separate each $\sigma_n$ into two sets of pairs of links $\sigma^1_n$ and $\sigma^2_n$ such that $\sigma^1_n$ contains all the connected pairs of links, i.e., $\sigma^1_n = \{(g,h)\in\sigma_n|\ h\in A(g)\} $  and $\sigma^2_n =  \sigma_n\backslash \sigma^1_n$. This allows us to write \eqref{eq:R} as  
\begin{align}
 R(\btheta|\bar{\btheta})
 &= \sum_{n\in[N]} \bE_{\Gamma_n|\sigma_n,\bar{\btheta}}\left[\sum_{\varsigma \in \sigma^1_n}\ln P( \varsigma|\btheta) + \sum_{\varsigma \in \Gamma_n}\ln P( \varsigma|\btheta)\right]\nonumber\\
 & = \sum_{n\in[N]} \sum_{\varsigma \in \sigma^1_n}\ln P( \varsigma|\btheta) + \sum_{n \in [N]} \bE_{\Gamma_n|\sigma_n,\bar{\btheta}}\left[ \sum_{\varsigma \in \sigma_n^2}\ln P( \varsigma|\btheta)\right]\nonumber\\
 & = \sum_{n\in [N]} \sum_{\varsigma \in \sigma^1_n}\ln P( \varsigma|\btheta) + \sum_{n\in [N]} \sum_{\varsigma \in \sigma^2_n} \bE_{\Gamma^{n,\varsigma}|\;\bar{\btheta}}\big[\ln P( \Gamma^{n,\varsigma}|\btheta)\big],\label{eq:R-eq3}
\end{align}
where $\Gamma^{n,\varsigma}$ is a random variable representing a complete path connecting the pair of unconnected pair $\varsigma\in \sigma^2_n$,    $ \bE_{\Gamma^{n,\varsigma}|\varsigma,\bar{\btheta}}\big[\ln P( \Gamma^{n,\varsigma}|\btheta)\big]$ is an expectation over $\Gamma^{n,\varsigma}$, and $P(\Gamma^{n,\varsigma}|\btheta)$ is the probability of complete  path $\Gamma^{n,\varsigma}$. Since $\Gamma^{n,\varsigma}$ is complete,   $P(\Gamma^{n,\varsigma}|\btheta)$ can be written as the product of  choice probabilities of connected links.

An EM algorithm will perform alternatively E and M steps until converging to a fixed-point solution. At an E step and under a current estimate $\btheta^t$, we need to create function $R(\btheta|\btheta^t)$ based on $\Phi$ and the current parameters $\btheta^t$. Even though there may be infinite numbers of complete paths $\Gamma$ connecting a pair of links $\varsigma \in \sigma^2_n$, we can approximate the expectation by sampling as
\begin{equation}\label{eq:approx-R}
    \hat{R}(\btheta|\btheta^t) =
    \sum_{n\in [N]} \sum_{\varsigma \in \sigma^1_n}\ln P( \varsigma|\btheta) + \sum_{n\in [N]} \sum_{\varsigma \in \sigma^2_n} \sum_{s\in [S]} \ln P( \gamma^{n,\varsigma}_s|\btheta) P(\gamma^{n,\varsigma}_s|\varsigma,\btheta^t)/\cP^{n,\varsigma}(\btheta^t), 
\end{equation}
where $\gamma^{n,\varsigma}_1,\ldots,\gamma^{n,\varsigma}_S$ are $S$ realizations of $\Gamma^{n,\varsigma}$, $P(\gamma^{n,\varsigma}_s|\varsigma,\btheta^t)$ is the probability of the path  $\gamma^{n,\varsigma}_s$  given a pair of links $\varsigma$ and conditional on the current estimate $\btheta^t$, and $\cP^{n,\varsigma}(\btheta^t)=\sum_{s\in [S]} P(\gamma^{n,\varsigma}_s|\varsigma,\btheta^t)$.
Here, for ease of notation, we assume that the numbers of samples $S$ are the same over $\varsigma$. If it is not the case, we always can generate some dummy paths with zero probabilities.  
Typically, when the sample size $S$ is large enough, the left-hand side of \eqref{eq:approx-R} will converge to the true $R(\btheta|\btheta^t)$ with probability one due to the \textit{law of large number}.
 
To sample from the conditional distribution of $\Gamma^{n,\varsigma}$
given a pair $\varsigma = (k,a)$, $a\notin A(k)$, and parameters $\btheta^t$,
we can generate paths that connect these two links according to the link probabilities $P(h|g;\btheta^t)$, $\forall h\in A(g)$. The probability $P(\gamma^{n,\varsigma}|\;\btheta^t)$  can be computed by multiplying the link choice probabilities $P(h|g;\btheta^t)$ for every connected pair $(g,h)$ on $\gamma^{n,\varsigma}$.   In summary, we describe the EM algorithm in Algorithm \ref{alg:EM} below.

\begin{algorithm}[htb]
\DontPrintSemicolon
	\SetKwRepeat{Do}{do}{while}
\caption{\textit{EM algorithm}\label{alg:EM}}  Choose a converge threshold $\xi>0$, set $t=0$, choose an initial parameter vector $\btheta^0$.\; 
\Do{$||\btheta^{t}-\btheta^{t+1}|| \geq \xi$ \tcc{not converged to a stationary point}}{
\begin{itemize}
    \item \textbf{E step:} For each observation $\sigma_n$ and  for each unconnected pair $\varsigma \in \sigma^2_n$, generate paths $\gamma^{n,\varsigma}_1,\ldots,\gamma^{n,\varsigma}_S$ connecting that pair according to the link choice probabilities $P(h|g;\btheta^t)$, $\forall h\in A(g)$. Compute $P(\gamma^{n,\varsigma}_s|\btheta^t)$, $s \in [S]$. Create function $\hat{R}(\btheta|\btheta^t)$ using \eqref{eq:approx-R}.
    \item \textbf{M step:} Choose next $\btheta^{t+1}$ by maximizing $\widehat{R}(\btheta|\btheta^t)$
    \begin{equation}\label{eq:max-R}
    \btheta^{t+1} = \text{argmax}_{\btheta} \widehat{R}(\btheta|\btheta^t).    
    \end{equation}
\end{itemize}
$t\leftarrow t+1$
}
\end{algorithm}

The EM algorithm iterative performs the \textbf{E} and \textbf{M} steps and  stops when it converges to a fixed point solution, i.e., $\btheta^t$ is sufficiently closed to $\btheta^{t+1}$. It generates a sequence $\{\btheta^1,\btheta^2,\ldots\}$ such that the marginal likelihood of the data $L(\Phi|\btheta) = \int L(\btheta;\Phi,\Gamma)d\Gamma$ is non-decreasing. In general, the algorithm  does not guarantee that the sequence $\{\btheta^t\}$ converges to a maximum likelihood estimator, but typically, the derivative of the likelihood function will be arbitrarily close to zero at the fixed point \citep{dempster1977maximum,arcidiacono2011conditional}.

The \textbf{M} step of the EM algorithm involves solving a continuous maximization problem  $\max_{\btheta}\hat{R}(\btheta|\btheta^t)$. The computation of $\hat{R}(\btheta|\btheta^t)$ and its gradient w.r.t. $\btheta$ can be done by computing  the value function $V$ and its Jacobian. As mentioned earlier, these  value functions  can be obtained via  solving systems of linear equations for the RL model, and via value iteration for the NRL model. In the context of RL, it can be shown that $\widehat{R}(\btheta|\btheta^t)$ can be written as a sum of the logarithm of the choice probabilities of several complete paths. As a result, if the link utilities $v(a|k)$, $\forall a,k\in\tA$ are linear in $\btheta$, then  $\widehat{R}(\btheta|\btheta^t)$
is concave in $\btheta$, making the maximization problem in Algorithm \ref{alg:EM} tractable, noting that this linear-in-parameters setting is common in most of the existing route choice modeling studies \citep{Prato2012,FosgFrejKarl13,MaiFosFre15}.
We state this result in the following proposition.
\begin{proposition}
Under the RL model, if the deterministic link utilities $v(a|k)$, $k,a\in\tA$, $a\in A(k)$, are linear in parameters $\btheta$, then $\widehat{R}(\btheta|\btheta^t)$ is concave in $\btheta$. 
\end{proposition}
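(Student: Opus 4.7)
The plan is to exploit the fact that the weights $P(\gamma^{n,\varsigma}_s|\varsigma,\btheta^t)/\cP^{n,\varsigma}(\btheta^t)$ in \eqref{eq:approx-R} depend only on $\btheta^t$ and are therefore non-negative constants in $\btheta$ that sum to one over $s$ for each $(n,\varsigma)$. Drawing one index $s_\varsigma$ independently per $\varsigma \in \sigma^2_n$ from its weight distribution, and denoting by $\hat{\sigma}_n(\{s_\varsigma\})$ the complete origin-to-destination path formed by splicing the observed connected pairs in $\sigma^1_n$ together with the sampled completions $\gamma^{n,\varsigma}_{s_\varsigma}$, linearity of expectation lets me rewrite
\begin{equation*}
\widehat{R}(\btheta|\btheta^t) \;=\; \sum_{n\in[N]} \bE_{\{s_\varsigma\}}\!\left[\ln P\bigl(\hat{\sigma}_n(\{s_\varsigma\})\,\big|\,\btheta\bigr)\right].
\end{equation*}
Since this expectation uses only $\btheta^t$-dependent (hence constant-in-$\btheta$) weights, and concavity is preserved under non-negative linear combinations, the claim reduces to showing that $\ln P(\hat{\sigma}|\btheta)$ is concave in $\btheta$ for an arbitrary complete path $\hat{\sigma}=(k_0,\ldots,k_T=d)$.

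This remaining step is the standard complete-path concavity argument for the RL model. Telescoping the link choice log-probabilities yields
\begin{equation*}
\ln P(\hat{\sigma}|\btheta) \;=\; \frac{1}{\mu}\sum_{t=0}^{T-1} v(k_{t+1}|k_t) \;+\; \frac{V(d) - V(k_0)}{\mu},
\end{equation*}
where the first sum is linear in $\btheta$ by the linear-in-parameters hypothesis and $V(d)$ is constant in $\btheta$, so concavity is equivalent to convexity of the origin value $V(k_0)$ in $\btheta$. To establish this, I would iterate the linear fixed point $\bZ=(\bI-\bM)^{-1}\bb=\sum_{n\geq 0}\bM^n\bb$ and read each entry of $\bM^n \bb$ combinatorially as a sum over $n$-step walks from $k_0$ to $d$, which gives the log-sum-exp representation
\begin{equation*}
V(k_0) \;=\; \mu\ln \sum_{\sigma\in\Omega^{k_0,d}} \exp\!\bigl(U(\sigma)/\mu\bigr),
\end{equation*}
with $U(\sigma)=\sum_t v(k_{t+1}|k_t)$ affine in $\btheta$. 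Convexity of $V(k_0)$ then follows from the standard convexity of the log-sum-exp of affine functions.

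The main obstacle I anticipate is making this log-sum-exp identity rigorous in the presence of cycles, where $\Omega^{k_0,d}$ is infinite. I would handle this by invoking the paper's standing assumption that $v(a|k)<0$ on all non-dummy links, which keeps the entries of $\bM$ small enough that its spectral radius is strictly less than one. This guarantees absolute convergence of the Neumann series $\sum_{n\geq 0}\bM^n\bb$ and validates the term-by-term combinatorial expansion, after which the standard convex-analysis facts (convexity of log-sum-exp of affine functions and preservation of concavity under non-negative linear combinations) close the proof.
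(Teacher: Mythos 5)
Your proposal is correct and takes essentially the same route as the paper: both rewrite $\widehat{R}(\btheta|\btheta^t)$ as a non-negative, constant-in-$\btheta$ weighted combination of log-probabilities of \emph{complete} paths (your expectation over independently drawn indices $s_\varsigma$ is exactly the paper's sum over $(s_1,\ldots,s_{K_n})\in\cS^n$ with product weights $\prod_j p^{n,\varsigma^n_j}_{s_j}$), and both then reduce concavity to the linear-minus-log-sum-exp structure of a complete path's log-probability. The only difference is in the last step, where the paper simply cites the closed-form path probability from \cite{FosgFrejKarl13} while you re-derive it by telescoping the link-choice probabilities and expanding the Neumann series $\sum_{n\ge 0}\bM^n\bb$ — a self-contained variant that is, if anything, more careful than the paper about convergence of the infinite sum over paths in a cyclic network.
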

\proof{}
For notational simplicity,  let us define $p^{n,\varsigma}_s = P(\gamma^{n,\varsigma}_s|\varsigma,\btheta^t)/\cP^{n,\varsigma}(\btheta^t)$, for all $n \in [N]$, $s\in [S]$, and $\varsigma\in \sigma^2_n$. For any $n,\varsigma$ we have $\sum_{s\in [S]} p^{n,\varsigma}_s = 1$.
We now assume that each $\sigma^2_n$ contains $K_n$ disconnected pairs, denoted as $\varsigma^n_1,\ldots,\varsigma^n_{K_n}$ and let
$$\cS^n = \underbrace{[S]\times \ldots\times [S]}_{K_n\text{ times}},$$
recalling that $[S] = \{1,2,\ldots,S\}$.  
We now write $\widehat{R}(\btheta|\btheta^t)$ as
\begin{align}
    \widehat{R}(\btheta|\btheta^t) &=  \sum_{n\in [N]} \sum_{\varsigma \in \sigma^1_n}\ln P( \varsigma|\btheta) + \sum_{n\in [N]} \sum_{\varsigma \in \sigma^2_n} \sum_{s\in [S]} \ln P( \gamma^{n,\varsigma}_s|\btheta) p^{n,\varsigma}_s \label{eq:th1-eq1} 
\end{align}  
We now see that, for any $\varsigma^n_i\in \sigma^2_n$, $i\in [K_n]$,  and $s\in [S]$, the term $\ln P( \gamma^{n,\varsigma^n_i}_s|\btheta) p^{n,\varsigma^n_i}_s$ as
\begin{align}
    p^{n,\varsigma^n_i}_s \ln P( \gamma^{n,\varsigma^n_i}_s|\btheta)  &=  \ln P( \gamma^{n,\varsigma^n_i}_s|\btheta) \left(\prod_{\substack{j\in[K_n]\\j\neq i }} \left(\sum_{s'\in [S]} p_{s'}^{n,\varsigma^n_j}\right) \right) p^{n,\varsigma^n_i}_s\nonumber \\
    &=\ln P( \gamma^{n,\varsigma^n_i}_s|\btheta) \left(\sum_{\substack{(s_1,...,s_{K_n})\in \cS^n\\s_i = s}} \left( \prod_{j\in [K_n]} p^{n,\varsigma^n_j}_{s_j}\right)\right). \label{eq:th1-eq2}
\end{align}
Moreover, we have
\begin{equation}
\label{eq:th1-eq3}
\sum_{\varsigma \in \sigma^1_n}\ln P( \varsigma|\btheta) = \sum_{\varsigma \in \sigma^1_n}\ln P( \varsigma|\btheta) \left(\sum_{\substack{(s_1,...,s_{K_n})\in \cS^n}} \left( \prod_{j\in [K_n]} p^{n,\varsigma^n_j}_{s_j}\right)\right).
\end{equation}
We now can combine \eqref{eq:th1-eq1}-\eqref{eq:th1-eq3}
 to have
\begin{align}
  \widehat{R}(\btheta|\btheta^t)  &= \sum_{n\in [N]} \sum_{(s_1,...,s_{K_n})\in \cS^n}
    \left(
    \left(\prod_{i\in [K_n]} p^{n,\varsigma^n_i}_{s_i}\right)
    \left(
    \sum_{\varsigma \in \sigma^1_n}\ln P( \varsigma|\btheta) +   \sum_{i=1}^{K_n} \ln P( \gamma^{n,\varsigma^n_i}_{s_i}|\btheta)\right) 
    \right)\nonumber \\
&=    \sum_{n\in [N]} \sum_{(s_1,...,s_{K_n})\in \cS^n}
    \left(\prod_{i\in [K_n]} p^{n,\varsigma^n_i}_{s_i}\right)
    \left(\ln P\left((\sigma_n,\gamma^{n,\varsigma^n_1}_{s_1},...,\gamma^{n,\varsigma^n_{K_n}}_{s_{K_n}} )|\btheta\right)
    \right), \nonumber
\end{align}
where $P\left((\sigma_n,\gamma^{n,\varsigma^n_1}_{s_1},...,\gamma^{n,\varsigma^n_{K_n}}_{s_{K_n}} )|\btheta\right)$ is the probability of the (complete) path $(\sigma_n,\gamma^{n,\varsigma^n_1}_{s_1},...,\gamma^{n,\varsigma^n_{K_n}}_{s_{K_n}} )$.  \cite{FosgFrejKarl13} show that the choice probability of a complete path can be written as 
\[
P\left((\sigma_n,\gamma^{n,\varsigma^n_1}_{s_1},...,\gamma^{n,\varsigma^n_{K_n}}_{s_{K_n}} )|\btheta\right) = \frac{\exp(v(\sigma_n,\gamma^{n,\varsigma^n_1}_{s_1},...,\gamma^{n,\varsigma^n_{K_n}}_{s_{K_n}} |\btheta))}{\sum_{\tau \in \Omega^n} \exp(v(\tau|\btheta))},
\]
where $v(\tau|\btheta)$ is the deterministic utility of path $\tau$, i.e., $v(\tau) = \sum_{(k,a)\in\tau} v(a|k)$ and $\Omega^n$ is the set of all possible paths  that share the same destination with observation $n${-th}. Under the assumption that $v(a|k)$ is linear in $\btheta$, $v(\tau)$ is also linear in $\btheta$. As a result, we can write 
\begin{equation}
\label{eq:prop1-eq1}
\ln P\left((\sigma_n,\gamma^{n,\varsigma^n_1}_{s_1},...,\gamma^{n,\varsigma^n_{K_n}}_{s_{K_n}} )|\btheta\right) = v(\sigma_n,\gamma^{n,\varsigma^n_1}_{s_1},...,\gamma^{n,\varsigma^n_{K_n}}_{s_{K_n}} |\btheta) - \ln \left(\sum_{\tau \in \Omega^n} \exp v(\tau|\btheta)\right)
\end{equation}
The first term of \eqref{eq:prop1-eq1} is linear in $\btheta$ and the second term has a log-sum-exp convex form of a geometric program, thus is convex in $\btheta$. So, $\ln P\left((\sigma_n,\gamma^{n,\varsigma^n_1}_{s_1},...,\gamma^{n,\varsigma^n_{K_n}}_{s_{K_n}} )|\btheta\right)$ is concave, thus $\widehat{R}(\btheta|\btheta^t)$ is concave, as desired. 
\endproof

If the choice model is the NRL, the LL and $\widehat{R}(\btheta|\btheta^t)$ functions have highly nonlinear forms and are generally not concave. In this case, the M step of the EM algorithm may be difficult to handle globally. Moreover, it can be seen that the EM algorithm is generally expensive to perform, as it requires solving the maximization problem \eqref{eq:max-R} several times until getting a fixed point solution. Furthermore, each maximization problem \eqref{eq:max-R} requires sampling several paths between every unconnected pair of links  to approximate the expectation ${R}(\btheta|\btheta^t)$, which is  also  expensive.

\section{The Decomposition-Composition Algorithm }
\label{sec:DC}

In this section, we propose  a practical way to exactly compute  the likelihood of incomplete observations. We then present our \textit{decomposition-composition} method to speed up the computation of the LL function as well as its gradients. 

\subsection{Computing the Log-Likelihood of Incomplete Observations}

Given an unconnected pair $(k,a)$ such that the link observations between these two links are missing. 
To compute the LL of the path that contain pair $(k,a)$, we need to compute $P(a|k)$, i.e.,  the probability of reaching $a$ from $k$. 
As mentioned earlier, this can be done by enumerating all possible paths between these two links, which would be not possible with  a real network.
To do this in a tractable way, we define $\pi^a(s)$ as the probability of reaching link $a\in\tA$ from link $s\in\cA$. The values of $\pi^a(s)$, $\forall s,a\in \tA$, satisfies the following recursive equations 
\begin{equation}\label{eq:recursive-pi}
    \pi^a(s) = 
    \begin{cases}
    \sum_{s'\in A(s)}P(s'|s)\pi^a(s') & \forall s\neq a\\
    1 & s=a.
    \end{cases}
\end{equation}
So, if we define $\bh^a$ as a vector of size $|\tA|$ with all zero elements except $h^a_{a} = 1$ and  a matrix $\bQ^a$ of size $(|\tA|\times|\tA|)$ with elements 
\[
\bQ^{a}_{s,s'} =  
\begin{cases}
P(s'|s),& \forall s,s'\in \tA,\;  s\neq a, s'\in A(s) \\
0 &\text{ if } s = a,
\end{cases}
\]
then we can write \eqref{eq:recursive-pi} as 
\begin{equation}\label{eq:pi-linear-system}
\bpii^a = \bQ^{a} \bpii^a + \bh^a, \text{ or }\bpii^a = (\bI-\bQ^{a})^{-1}\bh^a,    
\end{equation}
where $\bpii^a$ is a vector of size $(|\tA|)$ with entries $\pi^a_s = \pi^a(s)$, $\forall s\in\tA$, and $\bI$ is the identity matrix of size $(|\tA|\times|\tA|)$. Proposition \ref{theor:invertible-IQ} below shows that $\bI-\bQ^{a}$ is invertible, which guarantees the existence and uniqueness of the solutions to \eqref{eq:pi-linear-system}.

\begin{proposition}\label{theor:invertible-IQ}
For any $a\in\tA$, the matrix $\bI-\bQ^{a}$ is invertible. 
\end{proposition}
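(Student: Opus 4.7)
The plan is to establish invertibility by showing that the spectral radius $\rho(\mathbf{Q}^a)$ is strictly less than one, which immediately gives invertibility of $\mathbf{I}-\mathbf{Q}^a$ together with the convergent Neumann series $(\mathbf{I}-\mathbf{Q}^a)^{-1}=\sum_{k\geq 0}(\mathbf{Q}^a)^k$. The starting observation is that $\mathbf{Q}^a$ is entrywise dominated by the ``unrestricted'' RL transition matrix $\mathbf{P}$, defined by $P_{s,s'}=P(s'|s)$ when $s'\in A(s)$ and zero otherwise: $\mathbf{Q}^a$ agrees with $\mathbf{P}$ everywhere except on the row indexed by $a$, which has been zeroed out. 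Thus $0\leq \mathbf{Q}^a\leq \mathbf{P}$ componentwise, and monotonicity of the Perron--Frobenius spectral radius for non-negative matrices yields $\rho(\mathbf{Q}^a)\leq \rho(\mathbf{P})$.

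Next I would establish $\rho(\mathbf{P})<1$. The link-choice identity $P(s'|s)=M_{s,s'}Z_{s'}/Z_s$, immediate from \eqref{eq:bellman-RL}, rewrites in matrix form as $\mathbf{P}=\mathrm{diag}(\mathbf{Z})^{-1}\,\mathbf{M}\,\mathrm{diag}(\mathbf{Z})$, so $\mathbf{P}$ and $\mathbf{M}$ are similar and share the same spectrum. Under the standing assumptions of the RL model---negative deterministic utilities on all non-dummy links and reachability of the destination from every state---the existence of a finite positive solution $\mathbf{Z}$ of $\mathbf{Z}=\mathbf{M}\mathbf{Z}+\mathbf{b}$ forces $\rho(\mathbf{M})<1$, a classical structural property of the RL model \citep{FosgFrejKarl13,MaiFosFre15}. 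Combining, $\rho(\mathbf{Q}^a)\leq\rho(\mathbf{P})=\rho(\mathbf{M})<1$, whence $\mathbf{I}-\mathbf{Q}^a$ is invertible.

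The delicate point is the appeal to $\rho(\mathbf{M})<1$, which is not re-proved in this excerpt but is a well-known consequence of the well-posedness of the RL model and is routinely invoked in the cited literature; formally, it is the condition ensuring that all the linear systems underpinning the RL estimation have unique finite solutions. A self-contained probabilistic alternative would bypass spectral radii entirely by interpreting $\mathbf{Q}^a$ as the substochastic transition kernel of a Markov chain on $\tilde{\mathcal{A}}$ in which both $a$ and $d$ act as absorbing traps; since the original chain already reaches $d$ almost surely from every state, adding $a$ as a second trap cannot destroy this almost-sure absorption, and the classical fundamental-matrix theorem for absorbing Markov chains then yields invertibility of $\mathbf{I}-\mathbf{Q}^a$ directly, with entries of $(\mathbf{I}-\mathbf{Q}^a)^{-1}$ carrying the pleasing interpretation of expected visit counts, consistent with the probabilistic meaning of $\pi^a(s)$ intended by \eqref{eq:recursive-pi}.
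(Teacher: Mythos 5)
Your argument is correct, and your ``self-contained probabilistic alternative'' in the last paragraph is in fact essentially the paper's own proof: the authors simply observe that $\bQ^a$ is sub-stochastic (row sums equal $1$ for $s\neq a$ and $0$ for $s=a$) and contains no recurrent class, and then invoke the standard fact that $\bI-\bQ^a$ is invertible for such matrices --- i.e.\ the fundamental-matrix theorem for absorbing chains. Your primary route (entrywise domination $0\le\bQ^a\le\bP$, monotonicity of the spectral radius, and the similarity $\bP=\dg(\bZ)^{-1}\bM\,\dg(\bZ)$ reducing everything to $\rho(\bM)<1$) is a legitimate and more quantitative alternative, but note two caveats. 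First, the similarity step is specific to the RL model: for the NRL model one has $P(a|k)=M_{ka}\exp(V(a)/\mu_k)/Z_k$ with $Z_a=\exp(V(a)/\mu_a)$, so $\bP$ and $\bM$ are no longer conjugate by $\dg(\bZ)$ when the scales $\mu_k$ vary across links; since the proposition is meant to cover general recursive models, the model-agnostic probabilistic argument (or a direct argument from row-stochasticity of $\bP$ plus absorption at $d$) is the one that actually proves the stated result. Second, both your route and the paper's ultimately rest on the same unproved structural hypothesis --- that the chain governed by the link-choice probabilities is absorbed at $d$ almost surely from every state (equivalently, no recurrent class among the non-absorbing states, equivalently $\rho(\bP)<1$ with the row of $d$ zeroed) --- which the paper asserts rather than proves and you correctly flag as inherited from the well-posedness of the model; so neither proof is more self-contained than the other on this point. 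In short: correct, with your fallback coinciding with the paper and your main argument buying an explicit Neumann-series representation at the cost of RL-specificity.
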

\begin{proof}
For any $a\in \tA$, we have
\[
\sum_{s'\in \tA} \bQ^a_{s,s'} =  
\begin{cases}
 \sum_{s'\in \tA} P(s'|s) = 1 & \text{ if } s\neq a \\
 0 & \text{ if }s= a.
\end{cases} 
\]
So, $\bQ^a$ is a \textit{sub-stochastic} matrix (contains non-negative entries and every row adds up to at most 1). Moreover, $\bQ^a$ contains no  recurrent class. So, $(\bI-\bQ^a)$ is invertible, as desired.  
\end{proof}

So after computing $\bpii^a$,  we can obtain $P(a|k)$ as 
$P(a|k) = \bpii^a_{k}$. We also need the gradients of $P(a|k)$ for the maximum likelihood estimation.  Such gradient values can be obtained through taking the Jacobian of $\bpii^a$ w.r.t. parameter $\theta_j$ as
$$
\frac{\partial \bpii^a}{\partial \theta_j} = -(\bI-\bQ^{a})^{-1}\frac{\partial \bQ^a}{\partial \theta_j}(\bI-\bQ^{a})^{-1}\bh^a = -(\bI-\bQ^{a})^{-1}\frac{\partial \bQ^a}{\partial \theta_j} \bpii^a,
$$
where $\partial \bQ^a/\partial \theta_j$ is a matrix of size $|\tA|\times |\tA|$ with entries $\partial Q^a_{s,s'}/\partial \theta_j$. This matrix can be obtained by taking gradients of $P(k'|k)$ for all $k',k\in \tA,k'\in A(k)$. These further can be computed by taking the gradients of the value functions, and the gradients of the value function $V(k)$ can be computed by solving some system of linear equations \citep{FosgFrejKarl13,MaiFosFre15}. So, the Jacobian of $\bpii^a$ can be obtained by solving a series of  \textit{systems of linear equations}.  


\subsection{Decomposition-Composition Algorithm}
In the above section, we show that  the LL of incomplete observations can be computed directly by solving one system of linear equations for each unconnected pair $(k_i,k_{i+1})$. This would be time-consuming if a path observation contains a large number of such unconnected pairs. We show in the following  that it is possible to 
obtain the probabilities of all the unconnected pairs on paths that  share the same destination by solving only one system of linear equations, instead of solving one linear system per each incomplete pair. The idea is to \textit{decompose} each individual system of linear equations into two parts among which one part is the same over unconnected pairs. This allows us to \textit{compose} all the individual linear systems into only one linear system.

For a path observation, assume that we observe $J$ incomplete pairs $\{(u_1,v_1)$,..., $(u_J,v_J)\}$, $(u_j,v_j) \in \tA\times \tA$, $\forall j=1,\ldots,J$. We define a matrix $\bQ^0$ of size $((|\tA|+1)\times (|\tA|+1)$ with entries
\[
\begin{aligned}
\bQ^0_{s,s'} &=  P(s'|s),\ \forall s,s'\in\tA, s'\in A(s)\\
\bQ^0_{s,|\tA|+1} &= \bQ^0_{|\tA|+1,s} = 0,\  \forall s\in\A.
\end{aligned}
\]
That is, the last row and last column of $\bQ^0$ are all-zero vectors. {We also define  
 a matrix $\bH$ of size $(|\tA|+1)\times K$ with entries}
\[
H_{s,k} = \begin{cases}
1  &\text{ if } s=|\tA|+1 \text{ or } s = u_k \\
0 &\text{ otherwise. }
\end{cases}
\]
The following theorem show that one can  obtain all the probabilities $P(v_j|u_j)$, $j=1,\ldots,J$ by solving only \textit{one system of linear equations}. Let we first denote by  $\widetilde{\bI}$ as an identity matrix of size $(|\tA|+1)\times (|\tA|+1)$. 
\begin{theorem}
\label{theo:th2} 
The matrix $\widetilde{\bI} - \bQ^0$  is always  invertible. Moreover,  if $\pmb{\Pi}$ is the unique solution to the system of linear equations $\pmb{\Pi} = (\widetilde{\bI}-\bQ^0)^{-1}\bH$, then 
$
P(v_j|u_j)= {\Pi}_{v_j, j},\; \forall j \in [J].
$
\end{theorem}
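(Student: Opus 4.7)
The plan is to prove the two assertions separately: first establish that $\widetilde{\bI}-\bQ^0$ is invertible (which also gives uniqueness of $\pmb{\Pi}$), and then verify the coordinate identity $\Pi_{v_j,j}=P(v_j|u_j)$ by decomposing the linear system column-by-column and reducing each column to the single-pair recursion of~\eqref{eq:pi-linear-system}.

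For invertibility, I would essentially rerun the argument of Proposition~\ref{theor:invertible-IQ}. Among the rows of $\bQ^0$, those indexed by $s\in\tA\setminus\{d\}$ sum to $\sum_{s'\in A(s)}P(s'|s)=1$; the row at $d$ vanishes because $A(d)=\emptyset$; and the last row at index $|\tA|+1$ vanishes by construction. Hence $\bQ^0$ is sub-stochastic, and the only two candidates for carrying a non-trivial recurrent class, namely $d$ and $|\tA|+1$, have zero outgoing mass. Exactly as in Proposition~\ref{theor:invertible-IQ}, or equivalently via the Neumann series $(\widetilde{\bI}-\bQ^0)^{-1}=\sum_{t\ge 0}(\bQ^0)^t$ (convergent because $\rho(\bQ^0)<1$), the matrix $\widetilde{\bI}-\bQ^0$ is invertible, and uniqueness of $\pmb{\Pi}$ follows immediately.

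For the coordinate identity I would use that $(\widetilde{\bI}-\bQ^0)\pmb{\Pi}=\bH$ splits into $J$ column subsystems $(\widetilde{\bI}-\bQ^0)\pmb{\Pi}_{:,j}=\bH_{:,j}$. Because the last row and column of $\bQ^0$ are zero, the $|\tA|+1$-entry of $\pmb{\Pi}_{:,j}$ is decoupled from the other $|\tA|$ equations and acts only as an inert normalisation slot, so the remaining equations have precisely the structural shape of~\eqref{eq:pi-linear-system} for the single-pair probability vector $\bpii^{v_j}$. Matching these equations with the recursion~\eqref{eq:recursive-pi} for $\pi^{v_j}(\cdot)$ and invoking uniqueness from the first step, the relevant coordinate of $\pmb{\Pi}_{:,j}$ must coincide with $P(v_j|u_j)$. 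The main obstacle is the index bookkeeping: unlike $\bQ^{v_j}$, the matrix $\bQ^0$ is $j$-independent and does \emph{not} zero out the $v_j$-th row, so the ``boundary value one'' that $\bh^{v_j}$ previously supplied at row $v_j$ has to be reconstructed from the source entries of $\bH_{:,j}$ together with the absorbing behaviour of $\bQ^0$ near the destination. Once this identification is made, the pay-off advertised by the method is immediate: because the coefficient matrix $\widetilde{\bI}-\bQ^0$ is common to all $J$ columns, a single linear solve returns all $J$ reachability probabilities $P(v_j|u_j)$ simultaneously, which is exactly the computational gain claimed for the decomposition--composition algorithm.
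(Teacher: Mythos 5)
Your treatment of the first claim is fine and is essentially the paper's own argument: $\bQ^0$ (or its transpose) is sub-stochastic with no recurrent class, exactly as in Proposition~\ref{theor:invertible-IQ}, so $\widetilde{\bI}-\bQ^0$ is invertible and $\pmb{\Pi}$ is unique. The problem is the second claim, and the gap is precisely the step you postpone with ``once this identification is made.'' The $j$-th column subsystem $(\widetilde{\bI}-\bQ^0)\pmb{\Pi}_{:,j}=\bH_{:,j}$ is \emph{not} a re-indexed copy of the single-pair system \eqref{eq:pi-linear-system} for target $v_j$, so you cannot conclude by ``matching'' it with the recursion \eqref{eq:recursive-pi} and invoking uniqueness. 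In \eqref{eq:pi-linear-system} the matrix $\bQ^{v_j}$ has its $v_j$-th row zeroed out (absorption at the target), the unit mass in $\bh^{v_j}$ sits at the \emph{target} $v_j$, and the answer $P(v_j|u_j)$ is read off at coordinate $u_j$; in the composed system $\bQ^0$ zeroes no row indexed by $\tA$, the unit mass in $\bH_{:,j}$ sits at the \emph{source} $u_j$, and the answer is read off at coordinate $v_j$. These are two genuinely different linear systems --- up to the extra coordinate they are transposes of one another, one ``backward'' (fix the target, vary the source) and one ``forward'' (fix the source, vary the target) --- and uniqueness of solutions to either one says nothing about the other. Indeed, if one insists on the literal definition $\bQ^0_{s,s'}=P(s'|s)$ given before the theorem, the column $\pmb{\Pi}_{:,j}=\sum_{t\ge 0}(\bQ^0)^t\bH_{:,j}$ records quantities indexed by walks \emph{into} $u_j$, which is the wrong direction entirely; the theorem only holds under the transposed reading $\bQ^0_{s,s'}=P(s|s')$, which is the one the paper's proof (and its invertibility step) actually uses.

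What closes this gap in the paper is a new lemma that your plan is missing: for a fixed \emph{source} $u_j$, the reachability probabilities $\pi^j(s)=P(s|u_j)$ satisfy the \emph{forward} recursion $\pi^j(s)=\sum_{s'}P(s|s')\pi^j(s')$, with the boundary value $1$ injected through an artificial predecessor link $r$ (numbered $|\tA|+1$) whose unique successor is $u_j$. Writing this as $\bpii^j=\bQ^j\bpii^j+\bq$ and splitting off the last column of $\bQ^j$ --- the only part that depends on $j$ --- turns it into $(\widetilde{\bI}-\bQ^0)\bpii^j=\bH_{:,j}$ with a $j$-independent coefficient matrix, whence $P(v_j|u_j)=\pi^j(v_j)=\Pi_{v_j,j}$. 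That forward characterization is not ``index bookkeeping'' on \eqref{eq:recursive-pi}; it is a separate probabilistic identity that has to be stated and justified, and without it your column-by-column reduction does not go through.
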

\proof{}
To prove the invertibility, Let $\widehat{\bQ} = ({\bQ}^0)^\transpose$, we can see that $\widehat{Q}_{s,s'}\geq 0 $ for all $s,s'\in \tA$ and
$
\sum_{s'\in A(s)} \widehat{Q}_{s,s'} = 1, \;\forall s\in \tA
$ and $\sum_{s'\in A(s)} \widehat{Q}_{s,s'}=0 $ if $s = |\tA|+1$. Thus $\widehat{\bQ}$ is a \textit{sub-stochastic matrix}. Moreover, $\widehat{\bQ}$ contains no recurrent class, thus $(\widetilde{\bI} - \widehat{\bQ})$ is invertible. Moreover $(\widetilde{\bI} - \widehat{\bQ})^\transpose = (\widetilde{\bI} - {\bQ}^0)$, thus $\widetilde{\bI} - \bQ^0$ is invertible, as desired.  

For the second claim of the theorem,  let us consider a pair of missing data $(u_j,v_j)$. We create an artificial link $r$ such that it is impossible to reach $r$ from any other link in $\tA$ and  from $r$ we can only reach $u_k$ with probability 1. Let $\pi^j(s)$, for all $s\in \A$ be the probability of reaching $s$ from $u_j$, i.e., $P(s|u_j)$. We see that $\pi^j(s)$, $s\in \A$, satisfy the following recursive equations.
\begin{equation}\label{eq:recursive-pi'}
    \pi^j(s) = 
    \begin{cases}
    \sum_{s'\in \cA}P(s|s')\pi^j(s') & \forall s\in \A\\
    1 & s=r.
    \end{cases}
\end{equation}
Now, we define a matrix $\bQ^j$ of size $(|\tA|+1) \times (|\tA|+1)$ with elements
$\bQ^j_{s,s'} =  P(s|s')$, $\forall s,s'\in \tA\cup\{r\}, s\in A(s')$,
and a vector $\bq$ of size $|\tA|+1$ with all zero elements except $q_r = 1$. We also number  $r$ as $|\tA|+1$, so the last column and row of $\bQ^j$ and the last element of $\bq$ correspond to link $r$.  \eqref{eq:recursive-pi'} can be written as 
\begin{equation}\label{eq:linear-system-pik}
\bpii^j = \bQ^j \bpii^j+\bq.    
\end{equation}
Note that the last row of $\bQ^j$ is an all-zero vector as $r$ is not reachable by any other links, and the last column of $\bQ^j$ has all zero entries except the element that corresponds to link $u_j$. 
We  decompose \eqref{eq:linear-system-pik} as
\[\begin{aligned}
\bQ^j\bpii^j+\bq  &=
\begin{pmatrix}
\bQ^j_{1,1} & \bQ^j_{1,2} & \cdots & \bQ^j_{1,|\tA|+1} \\
\bQ^j_{2,1} & \bQ^j_{2,2} & \cdots & \bQ^j_{2,|\tA|+1} \\
\vdots  & \vdots  & \ddots & \vdots  \\
0 & 0 & \cdots & 0
\end{pmatrix}
\begin{pmatrix}
\pi^j_1\\
\pi^j_2\\
\vdots\\
\pi^j_{|\tA|+1}
\end{pmatrix} +\bq \\
&\stackrel{(a)}{=} \begin{pmatrix}
0 & 0 & \cdots & \bQ^j_{1,|\tA|} &0 \\
0 & 0 & \cdots & \bQ^j_{2,|\tA|} & 0 \\
\vdots  & \vdots  & \ddots & \vdots &\vdots   \\
0 & 0 & \cdots & \bQ^j_{2,|\tA|} & 0
\end{pmatrix}\bpii^j + \begin{pmatrix}
0 & 0 & \cdots & \bQ^j_{1,|\tA|+1} \\
0 & 0 & \cdots & \bQ^j_{2,|\tA|+1} \\
\vdots  & \vdots  & \ddots & \vdots  \\
0 & 0 & \cdots & 0
\end{pmatrix}
\begin{pmatrix}
\pi^j_1\\
\pi^j_2\\
\vdots\\
\pi^j_{|\tA|+1}
\end{pmatrix} +\bq \\
&=\bQ^0 \bpii^j+\begin{pmatrix}
0 & 0 & \cdots & \bQ^j_{1,|\tA|+1} \\
0 & 0 & \cdots & \bQ^j_{2,|\tA|+1} \\
\vdots  & \vdots  & \ddots & \vdots  \\
0 & 0 & \cdots & 0
\end{pmatrix}
\begin{pmatrix}
\pi^j_1\\
\pi^j_2\\
\vdots\\
\pi^j_{|\tA|+1}
\end{pmatrix} +\bq \\
&\stackrel{}{=} \bQ^0 \bpii^j+ \begin{pmatrix}
\bQ^j_{1,|\tA|+1}\\
\bQ^j_{2,|\tA|+1}\\
\vdots\\
0
\end{pmatrix} \pi^j_{|\tA|+1} + \bq 
\end{aligned}
\] 
where for \textit{(a)} we separate $\bQ^j$ into two parts; the first part contains all the columns of $\bQ^j$ except the last column and the second part only contains the last column of $\bQ^j$. We now note that  the last column of $\bQ^j$ have zero element except the one corresponding to $u_j$. Moreover $\pi^j_{|\tA|+1} = 1$, thus $(Q^j_{1,|\tA|+1},\ldots, Q^j_{|\tA|+1,|\tA|+1})^\transpose \pi^j_{|\tA|+1}$ is a vector of zero elements except the one at $u_k$, which is equal to 1. Thus, 
\[
\begin{pmatrix}
\bQ^j_{1,|\tA|+1}\\
\bQ^j_{2,|\tA|+1}\\
\vdots\\
0
\end{pmatrix} \pi^j_{|\tA|+1} + \bq
\]
is also the $j$-th column of matrix $\bH$ defined above. So, we have
\[
\bQ^j\bpii^j + \bq = \bQ^0\bpii^j + \bH_{:,j},
\]
where $\bH_{:,k}$ is the $j$-th column of $\bH$. So, $\bpii^j$ is a solution to the following system
\[
(\widetilde{\bI}-\bQ^0)\bpii^j = \bD_{:,j},
\]
where $\widetilde{\bI}$ is an identity matrix of size $(|\tA|+1)\times (|\tA|+1)$. This also means that if matrix $\pmb{\Pi}$ is a solution to the system of linear equations $(\widetilde{\bI}-\bQ^0)\pmb{\Pi} = \bD$ or  $\bpi = (\widetilde{\bI}-\bQ^0)^{-1}\bH$, then
\[
\bpii^j = \pmb{\Pi}_{:,j}\text{ and } {\Pi}_{v_j,j} = \bpii^j(v_j) = P(v_j|u_j),
\]
which is the desired result.
\endproof

For the MLE,  the gradients of $P(v_j|u_j)$ are required. The gradient of $P(v_j|u_j)$ w.r.t.  a parameter $\theta_t$ can also be obtained by computing the Jacobian of $\bpi$, which can be obtained by solving  the following linear system
\begin{equation}\label{eq:gradient-missing-prob}
    \frac{\partial \pmb{\Pi}}{\partial \btheta_t} = (\widetilde{\bI}-\bQ^0)^{-1}\frac{\partial \bQ^0}{\partial \btheta_t}(\widetilde{\bI}-\bQ^0)^{-1} \bH = (\widetilde{\bI}-\bQ^0)^{-1}\frac{\partial \bQ^0}{\partial \btheta_t} \bpi.
\end{equation}
Since $\widetilde{\bI}-\bQ^0$ is invertible, the linear system in \eqref{eq:gradient-missing-prob}  always has a unique solution.

Algorithm \ref{algo:LL-missing} below  describes the basic steps of our Decomposition-Composition (DC) algorithm (the term ``Decomposition-Composition'' refers to the fact that  we decompose matrix  $\bQ^j$ into two parts where one of them is the same over unconnected pairs $(u_j,v_j)$, and then compose all the individual  linear systems  into 
only one system of linear equations). 
\begin{algorithm}[htb]
	\tcc{Run a nonlinear optimization algorithm (gradient  accent or trust region or linear search \citep{NoceWrig06}) to estimate the model parameters  $\btheta$.}
	\While{\textit{not converged}}{
	\begin{itemize}
	    \item[(i)] Compute the value function $\bZ$ as well as its gradients by value iteration and/or solving systems of linear equations, as shown in \cite{FosgFrejKarl13,MaiFosFre15}
	    \item[(ii)] Compute link choice probabilities $P(a|k)$, for all $k,a\in\tA, a\in A(k)$ and their gradients
	    \item[(iii)] Solve the linear system $\bpi =  (\widetilde{\bI}-\bQ^0)^{-1}\bH$ and \eqref{eq:gradient-missing-prob} to get the probabilities of unconnected pairs $(u_j,v_j)$, $j = 1,\ldots,J$, as well as their gradients.
	    \item[(iv)] Compute the log-likelihood of the (incomplete) observations and its gradient w.r.t. $\btheta$
	    \item[(v)] Update $\btheta$
	\end{itemize} 
	}
	\caption{(\textbf{\textit{DC algorithm}})}\label{algo:LL-missing}
\end{algorithm}

The total number of linear systems to be solved in Algorithm \ref{algo:LL-missing} to compute the probabilities of unconnected  links is $(T+1)N^{\textsc{dest}}$, where $N^{\textsc{dest}}$  is the number of destinations  and $T$ is the size of $\btheta$. Here we note that if the model specification involves an origin-destination specific attribute, e.g., the link size attribute \citep{FosgFrejKarl13}, then  the total number of linear systems to be solved depends on the number of origin-destination pairs $N^{\textsc{od}}$ in the observations, i.e.,  $(T+1)N^{\textsc{od}}$. 
 Clearly, the  number of linear systems to be solved does not depend on the number of incomplete pairs of links in the observations.
Moreover, Algorithm \ref{algo:LL-missing} can be implemented in a parallel manner, which would help to speed up the computation.  
If the data is complete (no missing segment), then we just need to remove Step (iii) from the algorithm. 

It should be noted that Step (iii) of Algorithm \ref{algo:LL-missing} requires to solve some systems of linear equations that all involve the matrix $\bI-\bQ^0$. The $\bL\bU$ factorization is also a convenient approach to achieve good performance. Technically speaking, one can firstly decompose $\bI-\bQ^0$ into a lower triangular matrix $\bL$ and an upper  triangular matrix $\bU$ and use these two matrices to solve the corresponding linear system.

To perform Step (iii) of Algorithm \ref{algo:LL-missing},  we need to compute $\partial P(a|k)/\partial \theta_i$ for any parameter $\theta_i$, for any $a\in A(k)$.
 $P(a|k)$ can be computed using the utility $v(a|k)$ and value function $V(a)$ and $V(k)$. For instance, for the RL model, $P_{ka} = \exp\left(\frac{1}{\mu}(v(a|k)+V(a)-V(k))\right)$. By taking the logarithm of $P_{ka}$ and taking the derivatives on the both sides, the derivative of $P(a|k)$ w.r.t  a parameter $\theta_i$ can be easily obtained, for the RL model \citep{FosgFrejKarl13}, as 
\[
\frac{\partial P_{ka}}{\partial \theta_i} = \frac{P_{ka}}{\mu}\left( \frac{\partial v(a|k)}{\partial \theta_i}+\frac{\partial V(a)}{\partial \theta_i} - \frac{\partial V(k)}{\partial \theta_i} \right),
\]
For the NRL model \citep{MaiFosFre15}, the derivatives have more complicated formulas,  as we assume that the scales $\mu_k$ are also functions of the model parameters $\btheta$.
\[
\frac{\partial P_{ka}}{\partial \theta_i} = P_{ka}\left(\frac{1}{\mu_k} \left(\frac{\partial v(a|k)}{\partial \theta_i}+\frac{\partial V(a)}{\partial \theta_i} - \frac{\partial V(k)}{\partial \theta_i} \right) - \frac{1}{\mu_k^2}\frac{\partial \mu_k}{\partial \theta_i} \left(v(a|k)+V(a)-V(k)\right)\right).
\]
Moreover, as shown in previous work \citep{MaiFosFre15,FosgFrejKarl13}, the Jacobian of $V$ can be computed conveniently by solving a system of linear equations.

As discussed above, we have argued that the EM algorithm would be expensive as it requires several MLE problems until getting a fixed point solution. On the other hand, the DC algorithm requires to solve only one MLE problem, but at each iteration, we need to solve several systems of linear equations (or dynamic programs) to obtain the probabilities of unconnected pairs of links. So, it is to be expected that each iteration of the DC algorithm would be less expensive than an iteration of the EM, but more expensive than an iteration of the classical nested fixed-point algorithm used in previous work for the case of complete data \citep{MaiFrejinger22}.   

\section{Numerical Experiments}
\label{sec:experiments}

In this section, we present our numerical results with the RL \citep{FosgFrejKarl13} and the NRL models \citep{MaiFosFre15}. We use a dataset collected in Borl\"ange, Sweden, which has been used in several prior route choice studies \citep{FosgFrejKarl13,MaiBasFre15_DeC,MaiFosFre15}. The trip observations in the data are complete (there is no unconnected link observation) and to evaluate how the proposed algorithms handle the issue of data incompleteness, we  will randomly remove links from the  full observations.   

\subsection{Data and Experimental Settings}
We describe the dataset and our experimental settings in the following. The dataset is  based on GPS observations of car trajectories collected in Borl\"ange, Sweden (a network composed of 3,077 nodes and 7,459 links). The sample consists of 1,832 trips corresponding to simple paths with a minimum of five links. Moreover, there are 466 destinations, 1,420 different origin-destination (OD) pairs and more than 37,000 link choices in this sample.
We use the RL model specification of \cite{FosgFrejKarl13} and the  NRL model specification  of \cite{MaiFosFre15}. The same four attributes as in \citep{FosgFrejKarl13,MaiFosFre15} are used for the instantaneous utilities, i.e., link travel time $TT(a)$ of link $a$, left turn dummy $LT(a|k)$ that equals one if the turn angle from $k$ to $a$ is larger than 40 and less than 177,  U-turn dummy $UT(a|k)$ that equals one if the turn angle is larger than 177, and  link constant $LC(a)$ that is equal to 1 for every link $a\in \A$. For the scales $\mu_k$  of the NRL model, we also use three attributes  as in \cite{MaiFosFre15} for the scale specification $\mu_k$, which are travel time (TT), Link Size (LS) and the number of outgoing links (OL). 


We  generate datasets of missing information by removing  links from the complete trip observations. To evaluate the performance of our proposed algorithms (i.e. the EM and DC algorithms), we will run them with the generated datasets and use the parameter estimates obtained to compute   LL values based on the set of complete  observations. We will compare the EM and DC algorithms with two baselines  based on the  standard estimation algorithm used in previous work, i.e.  the nested-fixed point (NFXP) algorithm \citep{RUST1987}. The first baseline is based on the complete dataset and the second one is only based on the connected links in the generated datasets. More precisely, for the second baseline, given an incomplete  trip observation $\sigma  = \{k_1,\ldots,k_L\}$, we compute the  LL function based on $P(k_{i+1}|k_i)$ if  $k_i$  and $k_{i+1}$ are connected, i.e., $k_{i+1} \in A(k_i)$, for $i\in [L-1]$. The difference between the two proposed algorithms (EM and DC) and the second baseline is that the EM and DC algorithms take into consideration the  probabilities of unconnected links, whereas the baselines ignore them. As a result, it is  expected that the two baselines would run faster than the EM and DC algorithms. We denote the first baseline by NFXP-C (standing for the NFXP algorithm with \textit{complete} data) and the second baseline by NFXP-I (standing for the NFXP algorithm with \textit{incomplete} data). 

We  generate datasets of missing information by going through all the trip observations in the dataset and  randomly removing links (except the origins and destinations) with a given probability $p$. That is, given a removing probability $p \in [0,1]$, we remove each link observation  with a probability $p$. We vary $p$ from 0.1 to 0.9, where small $p$ values imply that the missing information is low, and large $p$ values mean that the missing information is high. For each removing probability $p$, 10 seeds are created. Thus, there are 10 incomplete datasets  generated for each removing probability $p$. In total there are 90 datasets  generated.

 We  estimate the RL and NRL models using the  EM, DC, NFXP-I algorithms with each generated incomplete dataset, and the NFXP-C with the complete dataset (the one before link removals). We then use the parameter estimates to compute the LL values of the complete dataset. Since the objective is to maximize the LL of the complete observations, the higher the LL values, the better the algorithm is. This will allow us to evaluate how the proposed algorithms (EM and DC) and the baseline NFXP-I recover the missing information, as compared to the NFXP-C which works with the complete dataset and always returns the highest LL values.  
 
 For the EM algorithm, we select the convergence threshold as $\xi = 10^{-4}$ and the number of path samples for each unconnected pair of links as $S=5$. Here we note that a higher number of samples can be chosen but since the EM  algorithm is expensive to execute, especially when the number of path samples is large (as we will show later), we restrict ourselves to $S=5$.   
 
The algorithms are coded in Python using Pytorch library \citep{paszke2019pytorch}. The experiments are conducted  on a PC with processor  Intel(R) Xeon(R) Silver 4116 CPU @ 2.10GHz with 64Gb RAM, equipped with a GPU NVIDIA Quadro P1000.
We use the limited memory BFGS (L-BFGS) method \citep{Nocedal2006} for the MLE problems (i.e., searching over the parameter space of $\btheta$ for the DC algorithm and performing the  \textit{M step} of the EM algorithm). This is a Quasi-Newton method that computes the Jacobian matrix updates with gradient evaluations, but only saves some last  updates  to save memory.

\subsection{Numerical Comparison}
In this section, we present numerical comparisons of the proposed algorithms (EM and DC) and the two baselines (NFXP-I and NFXP-C), based on the RL and NRL models. We first compare the LL values given by the four approaches, with a note that NFXP-C is based on the complete data and is considered as ground truth, thus always gives the highest LL values. The higher LL values imply that the corresponding algorithm performs better in recovering missing information from incomplete datasets.

\begin{figure}[htb]
    \centering
    \includegraphics[width=0.9\linewidth]{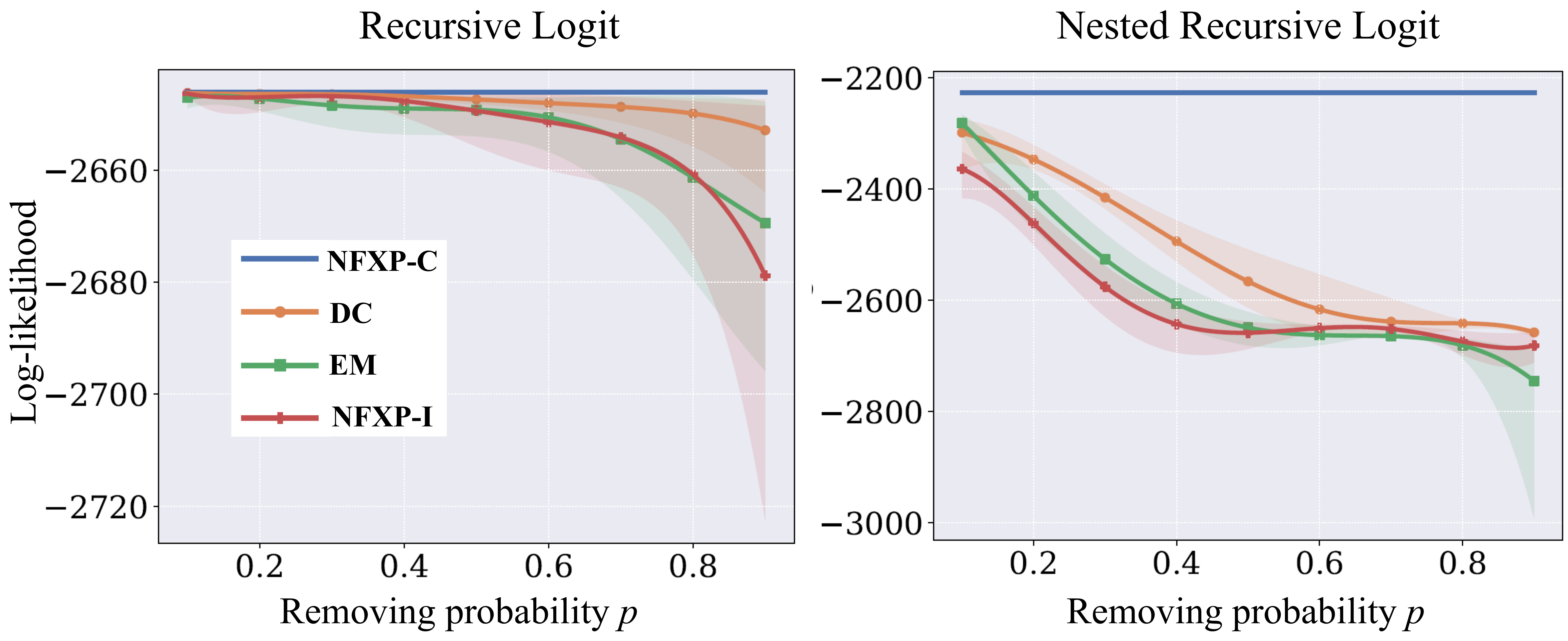}
       \caption{Log-likelihood comparison for the RL and NRL models.}
    \label{fig:LL-comparison}
\end{figure}    

Figure \ref{fig:LL-comparison} reports the means and standard errors of the  LL values given by the four approaches, where the solid curves represent the mean values,  and the  shaded areas represent the standard errors. We can see that, for both RL and NRL models, the DC always gives higher LL values, as compared to those given by the EM and NFXP-I. The EM seems to be better than NFXP-I in some cases,  and worse in some other cases. For the RL model, all the approaches (DC, EM and NFXP-I)  seem to perform well  as  $p$ increases. The gaps  between the DC, EM and NFXP-I are small for $p\leq 0.6$ and become more significant  when $p\geq 0.7$. This demonstrates the efficiency of the DC  algorithm in handling highly incomplete data.
For the NRL model,  we however see that the DC, EM and NFXP-I seem to perform worse, as compared to the case of the RL model; the LL values drop more  quickly as $p$ increases. This is to be expected as we know that the NRL model has more complicated forms and has more information to infer (i.e., the parameters of the link utilities and  the scales $\mu_k$, $k\in \tA$). We however still see that the DC significantly outperforms the EM and NFXP-I approaches for $p\leq 0.7$  and is slightly better than EM and NFXP-I for $p\geq 0.8$. Moreover, EM is slightly better than  NFXP-I for  $p\leq 0.5$ and slightly worse than NFXP-I for $p=0.9$.

\begin{table}[htb]
\centering
\begin{tabular}{l|c|c|c|c}
Model     & \begin{tabular}[c]{@{}c@{}}Removing\\ Probability\end{tabular} & \textbf{EM}         & \textbf{DC}        & \textbf{NFXP-I}     \\ 
\hline
\multirow{10}{*}{\textbf{RL}}  & 0.0& \multicolumn{3}{c}{-2646.08}\\ 
\cline{3-5}
& 0.1& -2646.98$\pm$0.73   & -2646.25$\pm$0.18  & -2646.43$\pm$0.28   \\
& 0.2& -2647.71$\pm$1.51   & -2646.40$\pm$0.21  & -2646.82$\pm$0.84   \\
& 0.3& -2647.77$\pm$1.25   & -2646.53$\pm$0.37  & -2647.00$\pm$0.72   \\
& 0.4& -2649.23$\pm$2.34   & -2646.84$\pm$0.36  & -2647.77$\pm$1.30   \\
& 0.5& -2649.57$\pm$2.74   & -2647.29$\pm$0.60  & -2649.31$\pm$2.81   \\
& 0.6& -2650.58$\pm$1.92   & -2648.23$\pm$0.76  & -2651.07$\pm$2.88   \\
& 0.7& -2654.02$\pm$6.61   & -2648.61$\pm$1.45  & -2654.83$\pm$7.00   \\
& 0.8& -2661.72$\pm$11.03  & -2649.97$\pm$2.42  & -2660.45$\pm$8.87   \\
& 0.9& -2669.37$\pm$17.59  & -2652.90$\pm$5.45  & -2678.95$\pm$24.05  \\ 
\hline
\multirow{10}{*}{\textbf{NRL}} & 0.0& \multicolumn{3}{c}{-2162.32}\\ 
\cline{3-5}
& 0.1& -2292.91$\pm$46.48  & -2258.39$\pm$62.34 & -2361.82$\pm$29.23  \\
& 0.2& -2452.83$\pm$42.77  & -2379.86$\pm$61.90 & -2493.52$\pm$23.33  \\
& 0.3& -2545.09$\pm$29.06  & -2419.17$\pm$28.45 & -2571.09$\pm$20.18  \\
& 0.4& -2615.28$\pm$20.70  & -2474.93$\pm$24.33 & -2625.09$\pm$22.23  \\
& 0.5& -2668.62$\pm$28.33  & -2528.01$\pm$19.49 & -2683.15$\pm$26.50  \\
& 0.6& -2728.13$\pm$22.93  & -2585.90$\pm$11.48 & -2732.94$\pm$27.95  \\
& 0.7& -2669.20$\pm$6.99   & -2631.07$\pm$17.03 & -2656.76$\pm$8.80   \\
& 0.8& -2676.56$\pm$8.86   & -2652.93$\pm$11.63 & -2668.96$\pm$10.79  \\
& 0.9& -2827.12$\pm$347.75 & -2659.35$\pm$5.79  & -2684.21$\pm$18.56 
\end{tabular}
\caption{Means and standard errors of the  LL values (over 10 runs) for the RL and NRL models.}
\label{tab:details-LLvalues}
\end{table}

We also report the details of the means and standard errors of the LL values in Table \ref{tab:details-LLvalues}, where the values before ``$\pm$'' are the means and the values after ``$\pm$'' are the standard errors. The LL values reported for $p=0.0$ are also the LL values given by the NFXP-C (i.e., given by the NFXP with the complete data). It can be seen that the LL values given by the NRL model are significantly higher than those given by the RL model. This is to be expected as previous studies already show that the NRL model always performs better than the RL in terms of in- and out-of-sample fits. Moreover, we see that the standards errors given by the RL are  much smaller than those given by the NRL model, for all the three algorithms, which can  be explained by the fact that the NRL model yields a more complicated choice probability formulation and is more difficult  to estimate. Furthermore, looking at the LL values across the three approaches EM, DC, and NFXP-I, we also see that DC does not only return larger mean LL values, but also yields small LL standard errors. This would be due to the fact that the DC approach computes the probabilities of unconnected segments exactly while the EM algorithm approximates them by sampling, and the NFXP-I simply ignores the missing segments. 
This observation demonstrates the efficiency and stability of the DC approach, as compared to the other ones. 

In summary,  our experiments show that DC  denominates the other approaches (i.e., EM and NFXP-I) in recovering  missing information, in the sense that the LL values given by the DC algorithm are always higher (thus are closer to the ground-truth values), as compared to those given by EM and NFXP-I. Moreover, our approaches seem to provide better results for the RL model than for the NRL model.

\begin{figure}[htb]
    \centering
    \includegraphics[width=0.9\linewidth]{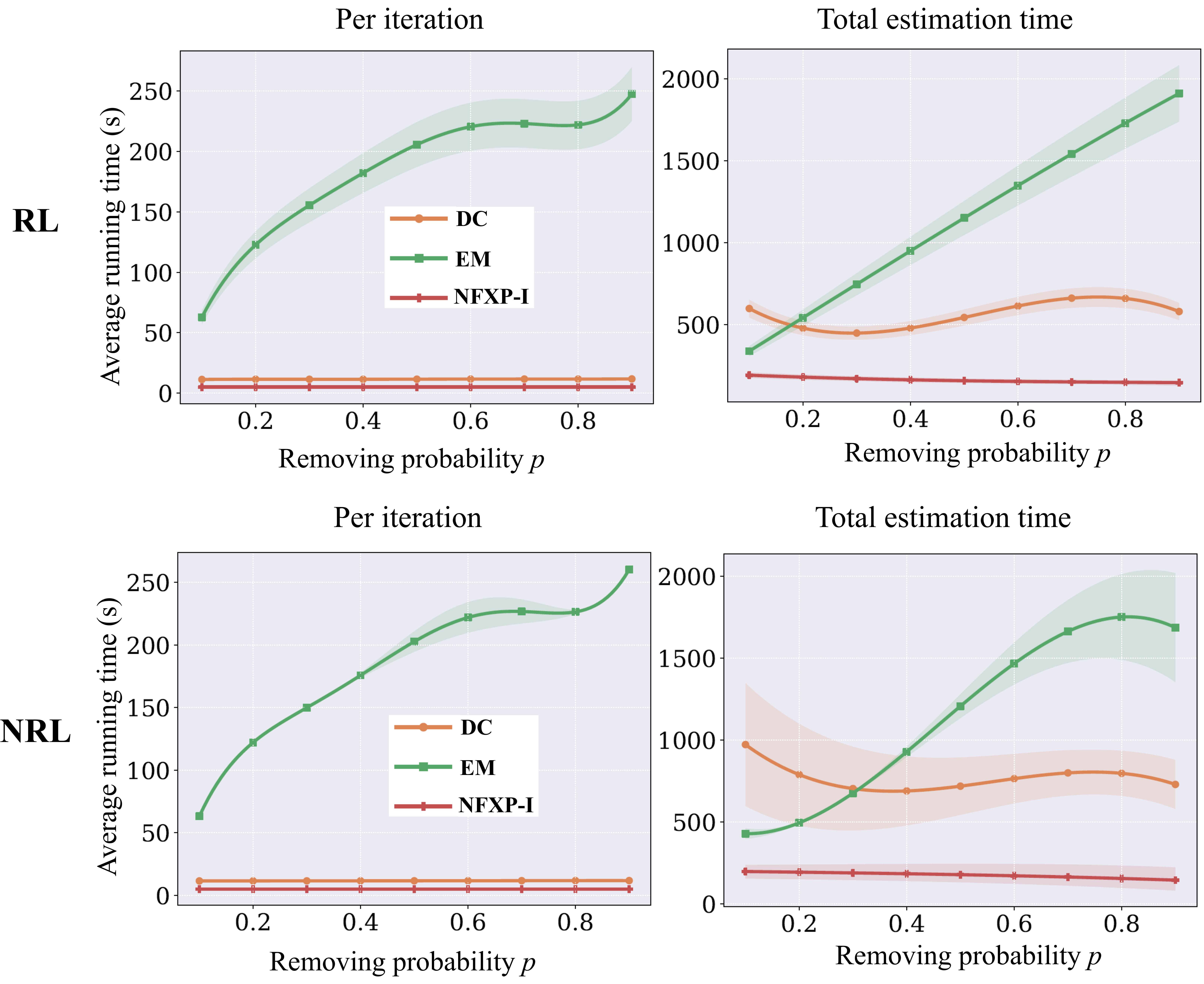}
       \caption{Average running time comparison for  the RL and NRL models.}
    \label{fig:CPUtime}
\end{figure}

We now turn our attention to the computational costs required to run our algorithms, which is also a focus of this work. In Figure \ref{fig:CPUtime} we plot the means (solid curves) and standard errors (shaded areas) of the running times of the three algorithm DC, EM, and NFXP-I, for  the RL and NRL models, noting that the running times of NFXP-I and NFXP-C are generally  similar.  In each figure, the left-hand side reports the average running times per iteration  and the right-hand  side reports the average total running times.  It can be seen that, per iteration, DC and EM are much faster than EM, and as expected, NFXP-I is faster than DC. Importantly, the running  times (per iteration) of EM grow  quickly as $p$ increases while  the running times (per iteration) of DC and NFXP-I are stable. This is because when the removing probability $p$ increases, the dataset would contain more unconnected segments and the EM algorithm would need more samples to approximate the expected LL function, thus becoming more expensive. In contrast, the DC algorithm is developed  in such a way that the number of systems of equations to be solved \textit{does not depend} on the number of unconnected segments in the path observations. The number of linear systems to be solved also does not change as $p$ increases for the NFXP-I algorithm. Thus, an increase  in $p$ would not make a significant impact on the running times of DC and NFXP-I.       

In terms of total estimation time, it is interesting to see that the EM algorithm is faster than  the DC for some small $p$ values (e.g., $p\leq 0.2$ for the RL and $p\leq 0.3$ for the NRL models), even though the running times per iteration of the EM are higher. This would be explained as follows. When $p$ is small, the number of unconnected segments in the observations is small, thus the EM algorithm only requires a small number of path samples and $\widehat{R}(\btheta|\btheta^t)$ would give a good approximation to the expected LL function. Moreover, the form of the approximate expected LL function in \eqref{eq:approx-R}, even though looks complex, it shares a similar structure with the standard LL functions studied in previous works \citep{FosgFrejKarl13,MaiFosFre15}. On the other hand, the LL function in the DC approach involves some exact probabilities of unconnected segments, thus would be more complicated. All these would be the reason for the fact that when $p$ is small, EM requires fewer iterations to converge and is faster, as compared to the DC algorithm. 

When $p\geq 0.2$ for the RL model and $p\geq 0.3$ for the NRL model, the running times of EM start becoming higher than those of DC and increase fast. This is to be expected as the running times per iteration of EM also increases fast as $p$ increase. The total estimation times of the DC algorithm, as expected, are stable and seem not to be affected much by an increase in $p$. Moreover, looking  across the two route choice models, we see that the total running times of DC are higher for the NRL model, as compared to those given by the RL model. This is to be expected as the NRL model is known to be more  expensive to estimate than the RL model. However, it seems not to be the case for the EM approach. The standard errors of the running times seem to be higher for the NRL model, which is consistent with our above observations for the LL comparisons. We refer the reader to the appendix for the details of the running times. 

In summary, in terms of running time,  the DC and EM  algorithms  are  more expensive to perform than the naive approach NFXP-I. The DC algorithm is generally much faster than  the EM, especially when $p$ is large. More importantly, the running times of EM grow fast as the data-missing level $p$ increases, while an increase in $p$ does not make a significant impact on the running times of DC. All these indicate the practical tractability of the DC approach for handling highly missing data.

\section{Conclusion}
\label{sec:concl}

In this paper, we have studied the issue of incomplete observations in route choice modeling. we have developed two solution approaches, namely, the EM and DC algorithms, aiming to efficiently recover missing information from data. While the EM algorithm requires to solve several MLE problems and would be expensive to perform, the DC algorithm is based on the idea that we can exactly compute the probabilities of unconnected segments by solving systems of linear equations. The main advantage of our DC algorithm is that the number of system of linear equations to be solved to compute the probabilities of unconnected segments does not depend on the number of unconnected segments, thus this approach scales  well when the number of unconnected segments increases. We provide numerical experiments based on a dataset collected from a real network, which showed that the DC algorithm outperforms the EM and other baselines in recovering  missing information. Our numerical results also showed that the DC algorithm is generally faster than the EM approach, and the running times of the DC seem not to be affected by the amount of missing information in the data observations.   



Our methods can be used  with other recursive models developed in the literature, e.g., the dynamic RL \citep{de2020RL_dynamic} or stochastic time-dependent RL \citep{mai2021RL_STD} models,  and may have applications beyond route choice modeling, for example, the estimation of  activity-based models or  general structural dynamic discrete choice models with missing data. These would shape some interesting directions for future work.    


\section*{Acknowledgements}

{This research is supported by  Singapore Ministry of Education (MOE) Academic Research Fund (AcRF) Tier 1 grant (Grant No: 20-C220-SMU-010) to the first and second authors}


\bibliographystyle{plainnat_custom}
\bibliography{refs}

\clearpage

\section*{Appendix}

Tables \ref{tab:CPU-time-RL} and \ref{tab:CPU-time-NRL}
below report the details of the running times (per iteration and overall) of the three algorithms (EM, DC and NFXP-I). The means are reported before $``\pm''$ and the standard errors are reported after $``\pm''$.
The running times of the NFXP-I are in general small and stable as $p$ increases. It is important to note that, in terms of  running time per iteration, the running time standard errors of the DC and NFXP-I are considerably smaller than those of the EM algorithm, but the overall running time standard errors of the DC and NFXP-I are comparable with those of the EM, indicating that the numbers of iterations required by DC and NFXP-I vary significantly across the 10 independent runs.

\begin{table}[htb]
\centering
\begin{tabular}{c|c|c|c|c}
\multicolumn{1}{l|}{}&\begin{tabular}[c]{@{}c@{}}Removing \\probability\end{tabular}  & EM                 & DC               & NFXP-I            \\ 
\hline
\multirow{9}{*}{Per iteration}         & 0.1  & 62.85$\pm$5.66     & 11.34$\pm$1.02   & 5.40$\pm$0.49     \\
& 0.2  & 102.16$\pm$9.19    & 11.49$\pm$1.03   & 5.61$\pm$0.51     \\
& 0.3  & 141.71$\pm$12.75   & 11.50$\pm$1.04   & 5.56$\pm$0.50     \\
& 0.4  & 169.21$\pm$15.23   & 11.48$\pm$1.03   & 5.46$\pm$0.49     \\
& 0.5  & 195.89$\pm$17.63   & 11.52$\pm$1.04   & 5.50$\pm$0.49     \\
& 0.6  & 214.90$\pm$19.34   & 11.61$\pm$1.04   & 5.60$\pm$0.50     \\
& 0.7  & 223.32$\pm$20.10   & 11.67$\pm$1.05   & 5.58$\pm$0.50     \\
& 0.8  & 221.64$\pm$19.95   & 11.67$\pm$1.05   & 5.42$\pm$0.49     \\
& 0.9  & 228.72$\pm$20.59   & 11.68$\pm$1.05   & 5.37$\pm$0.48     \\ 
\hline
\multirow{9}{*}{\begin{tabular}[c]{@{}c@{}}Total \\estimation time\end{tabular}} & 0.1  & 336.88$\pm$30.32   & 597.28$\pm$53.76 & 190.23$\pm$17.12  \\
& 0.2  & 452.45$\pm$40.72   & 517.06$\pm$46.54 & 183.30$\pm$16.50  \\
& 0.3  & 650.87$\pm$58.58   & 452.12$\pm$40.69 & 173.28$\pm$15.60  \\
& 0.4  & 848.78$\pm$76.39   & 457.00$\pm$41.13 & 165.35$\pm$14.88  \\
& 0.5  & 1061.45$\pm$95.53  & 511.65$\pm$46.05 & 158.76$\pm$14.29  \\
& 0.6  & 1255.07$\pm$112.96 & 580.52$\pm$52.25 & 154.18$\pm$13.88  \\
& 0.7  & 1460.70$\pm$131.46 & 644.85$\pm$58.04 & 150.49$\pm$13.54  \\
& 0.8  & 1645.63$\pm$148.11 & 667.59$\pm$60.08 & 147.91$\pm$13.31  \\
& 0.9  & 1824.82$\pm$164.23 & 629.08$\pm$56.62 & 145.82$\pm$13.12 
\end{tabular}
\caption{Means and standard errors of the running times (in seconds - over 10 runs), overall estimation time  and per iteration, for the RL model.}
\label{tab:CPU-time-RL}
\end{table}

\begin{table}[htb]
\centering
\begin{tabular}{c|c|c|c|c}
& \begin{tabular}[c]{@{}c@{}}Removing \\probability\end{tabular} & EM                 & DC                & NFXP-I            \\ 
\hline
\multirow{9}{*}{Per iteration}         & 0.1  & 63.17$\pm$1.03     & 11.53$\pm$0.00    & 5.63$\pm$0.36     \\
& 0.2  & 103.30$\pm$2.18    & 11.52$\pm$0.04    & 5.54$\pm$0.42     \\
& 0.3  & 138.25$\pm$0.43   & 11.54$\pm$0.06    & 5.54$\pm$0.43     \\
& 0.4  & 162.58$\pm$0.13   & 11.58$\pm$0.06    & 5.58$\pm$0.40     \\
& 0.5  & 191.02$\pm$5.32    & 11.61$\pm$0.06    & 5.59$\pm$0.38     \\
& 0.6  & 214.39$\pm$10.92   & 11.64$\pm$0.06    & 5.54$\pm$0.41     \\
& 0.7  & 226.40$\pm$11.60   & 11.69$\pm$0.06    & 5.48$\pm$0.48     \\
& 0.8  & 225.55$\pm$5.46    & 11.75$\pm$0.07    & 5.44$\pm$0.54     \\
& 0.9  & 235.50$\pm$-1.01   & 11.79$\pm$0.10    & 5.44$\pm$0.54     \\ 
\hline
\multirow{9}{*}{\begin{tabular}[c]{@{}c@{}}Total \\estimation time\end{tabular}} & 0.1  & 428.10$\pm$32.51   & 972.34$\pm$375.06 & 197.09$\pm$40.28  \\
& 0.2  & 448.51$\pm$14.83   & 854.86$\pm$337.80 & 194.84$\pm$43.20  \\
& 0.3  & 579.70$\pm$4.97    & 732.59$\pm$280.47 & 190.68$\pm$49.28  \\
& 0.4  & 797.22$\pm$17.67   & 688.93$\pm$231.93 & 186.04$\pm$56.14  \\
& 0.5  & 1082.86$\pm$52.50  & 701.55$\pm$189.90 & 180.32$\pm$63.69  \\
& 0.6  & 1348.92$\pm$100.02 & 741.97$\pm$161.46 & 174.27$\pm$70.04  \\
& 0.7  & 1591.04$\pm$163.63 & 787.37$\pm$142.52 & 166.72$\pm$75.46  \\
& 0.8  & 1728.20$\pm$229.48 & 804.20$\pm$136.61 & 158.69$\pm$78.41  \\
& 0.9  & 1740.45$\pm$298.27 & 771.47$\pm$142.39 & 149.52$\pm$78.70 
\end{tabular}
\caption{Means and standard errors of the running times (in seconds - over 10 runs), overall  and per iteration, for the NRL model.}
\label{tab:CPU-time-NRL}
\end{table}

\end{document}

This section presents numerical results and compares results across the models. The LLs of the RL and NRL models are presented in Figure \ref{fig_LL2}. Overall, the DC models with no missing data always have the best and most stable performances of the two models. Regarding the RL model, the LLs are similar for the missing probability of 0.5 or less \textcolor{red}{why? any insights?}. When the missing probability is 0.8 or more, the EM-BFS and Connected segments of the missing dataset give lower performances. On the other hand, in the NRL models, except for the LL of the DC with no missing dataset, those of others significantly drop as the missing probability increase. The LLs are more stable once the missing probability is more than 0.5. \textcolor{red}{why? any insights?}

\begin{figure}[htb]
    \centering
    \begin{minipage}[c]{0.6\textwidth}
        \centering
        \subfloat[Recursive Logit]{\includegraphics[width=0.45\linewidth]{graphs/graph_nll_without_mu_without_LS.pdf}}
        \hspace{0.5cm}
        \subfloat[Nested Recursive Logit]{\includegraphics[width=0.45\linewidth]{graphs/graph_nll_mu_ls.pdf}}
        \hspace{0.5cm}
    \end{minipage}
    \begin{minipage}[c]{0.35\textwidth}
        \centering
        \includegraphics[width=1.3\linewidth]{graphs/graph_nll_mu_ls_legend.pdf}
        \break \break 
        \scriptsize
        \begin{tabular}{ c*{5}{>{\centering\arraybackslash}m{2em}} }
            \multicolumn{2}{c}{} & RL & NRL \\
            \hline
            \multirow{5}{*}{beta} & OL & \checkmark & \checkmark  \\ 
            & LT & \checkmark & \checkmark  \\ 
            & UT & \checkmark & \checkmark \\ 
            & TT & \checkmark & \checkmark \\
            & LS & - & - \\
            \hline
            \multirow{3}{*}{scale} & OL & - & \checkmark \\ 
            & TT & - & \checkmark \\ 
            & LS & - & \checkmark  \\ 
        \end{tabular}
    \end{minipage}
    \caption{Log Likelihood of models in 2 different settings.}
\end{figure}

Figure \ref{fig_benchmark} illustrates the training and the average estimation times for the DC with the missing dataset and EM-BFS. Training times of the DC model in 10 missing probabilities are more stable as the missing is larger than 20\%. In EM-BFS model, the higher the missing probability, the longer the training time is. Similarly, the execution time of computing LL with missing data of EM-BFS is much longer than DC. The main reason is the EM model used the BFS algorithm for re-constructing the missing segments and it took many iterations of recursive for finding the best action for this missing one. 

\begin{figure}
    \centering
    \begin{minipage}[c]{0.6\textwidth}
        \centering
        {\includegraphics[width=0.45\linewidth]{graphs/graph_training_time_mu_ls.pdf}}
        \hspace{0.5cm}
        {\includegraphics[width=0.45\linewidth]{graphs/graph_exec_time_mu_ls.pdf}}
    \end{minipage}
    \begin{minipage}[c]{0.35\textwidth}
        \centering
        \includegraphics[width=1.1\linewidth]{graphs/graph_exec_time_mu_ls_legend.pdf}
    \end{minipage}
    \break
    \caption{The performance benchmark reports of DC \& EM-BFS models.}
    \label{fig_benchmark}
\end{figure}

\begin{figure}
    \centering
    \begin{minipage}[c]{0.6\textwidth}
        \centering
        {\includegraphics[width=0.45\linewidth]{graphs/graph_training_time_without_mu_without_ls.pdf}}
        \hspace{0.5cm}
        {\includegraphics[width=0.45\linewidth]{graphs/graph_exec_time_without_mu_without_ls.pdf}}
    \end{minipage}
    \begin{minipage}[c]{0.35\textwidth}
        \centering
        \includegraphics[width=1.1\linewidth]{graphs/graph_exec_time_without_mu_without_ls_legend.pdf}
    \end{minipage}
    \break
    \caption{The performance benchmark reports of DC \& EM-BFS models without using mu \& scale.}
    \label{fig_benchmark}
\end{figure}

The following tables \ref{tabl_NLL_WO_LSBeta_Scale}, \ref{tabl_WO_LS}, \ref{tabl_NLL_WO_LSBeta}, and \ref{tabl_NLL_all} present NLLs of models in different Beta and Scale features. In general, NLLs of DC are better than EM. The MaxEnt models have the lowest performance. \textcolor{red}{why? any insights?} Furthermore, models that are trained with more features are better, especially those with LS features. \textcolor{red}{why? any insights?}

\begin{table}[htb]
\centering
\begin{tabular}{|c|c|c|c|}
\hline
\textbf{\begin{tabular}[c]{@{}c@{}}Missing\\ Prob.\end{tabular}} & \textbf{EM}       & \textbf{DC} & \textbf{MaxEnt}   \\ \hline

0.0  & 2646.08$\pm$0.00  & 2646.08$\pm$0.00     & 2646.08$\pm$0.00  \\ \hline

0.1  & 2646.98$\pm$0.73  & 2646.25$\pm$0.18     & 2646.43$\pm$0.28  \\ \hline
0.2  & 2647.71$\pm$1.51  & 2646.40$\pm$0.21     & 2646.82$\pm$0.84  \\ \hline
0.3  & 2647.77$\pm$1.25  & 2646.53$\pm$0.37     & 2647.00$\pm$0.72  \\ \hline
0.4  & 2649.23$\pm$2.34  & 2646.84$\pm$0.36     & 2647.77$\pm$1.30  \\ \hline
0.5  & 2649.57$\pm$2.74  & 2647.29$\pm$0.60     & 2649.31$\pm$2.81  \\ \hline
0.6  & 2650.58$\pm$1.92  & 2648.23$\pm$0.76     & 2651.07$\pm$2.88  \\ \hline
0.7  & 2654.02$\pm$6.61  & 2648.61$\pm$1.45     & 2654.83$\pm$7.00  \\ \hline
0.8  & 2661.72$\pm$11.03 & 2649.97$\pm$2.42     & 2660.45$\pm$8.87  \\ \hline
0.9  & 2669.37$\pm$17.59 & 2652.90$\pm$5.45     & 2678.95$\pm$24.05 \\ \hline
\end{tabular}
\break
\caption{Negative Log-likelihood of 3 models without using LS-beta and scale features}
\label{tabl_NLL_WO_LSBeta_Scale}
\end{table}

\begin{table}[]
\centering
\begin{tabular}{|c|c|c|c|}
\hline
\textbf{\begin{tabular}[c]{@{}c@{}}Missing\\ Prob.\end{tabular}} & \textbf{EM}       & \textbf{DC} & \textbf{MaxEnt}   \\ \hline

0.0  & 2645.13$\pm$0.00  & 2645.13$\pm$0.00     & 2645.13$\pm$0.00  \\ \hline

0.1  & 2648.52$\pm$5.06  & 2645.66$\pm$0.40     & 2645.84$\pm$0.44  \\ \hline
0.2  & 2647.34$\pm$1.64  & 2645.98$\pm$0.68     & 2646.37$\pm$1.08  \\ \hline
0.3  & 2647.52$\pm$1.12  & 2646.04$\pm$0.72     & 2646.64$\pm$0.98  \\ \hline
0.4  & 2648.45$\pm$2.43  & 2646.47$\pm$0.75     & 2647.35$\pm$1.54  \\ \hline
0.5  & 2651.68$\pm$5.10  & 2646.73$\pm$0.87     & 2648.85$\pm$2.64  \\ \hline
0.6  & 2652.08$\pm$3.31  & 2648.36$\pm$2.60     & 2651.33$\pm$3.46  \\ \hline
0.7  & 2660.93$\pm$8.98  & 2649.55$\pm$4.28     & 2657.26$\pm$7.85  \\ \hline
0.8  & 2665.44$\pm$6.86 & 2655.19$\pm$7.15      & 2667.13$\pm$8.59  \\ \hline
0.9  & 2697.92$\pm$37.18 & 2659.07$\pm$4.83     & 2686.66$\pm$20.60 \\ \hline
\end{tabular}
\break
\caption{Negative Log-likelihood of 3 models without using LS-beta and LS-scale features}
\label{tabl_WO_LS}
\end{table}

\begin{table}[]
\centering
\begin{tabular}{|c|c|c|c|}
\hline
\textbf{\begin{tabular}[c]{@{}c@{}}Missing\\ Prob.\end{tabular}} & \textbf{EM}       & \textbf{DC} & \textbf{MaxEnt}   \\ \hline

0.0  & 2226.98$\pm$0.00  & 2226.98$\pm$0.00     & 2226.98$\pm$0.00  \\ \hline

0.1  & 2281.82$\pm$10.70 & 2297.67$\pm$26.42    & 2364.41$\pm$25.57 \\ \hline
0.2  & 2411.26$\pm$30.86 & 2349.30$\pm$13.66    & 2460.89$\pm$22.98 \\ \hline
0.3  & 2525.60$\pm$16.38 & 2417.85$\pm$19.31    & 2572.88$\pm$23.72 \\ \hline
0.4  & 2610.50$\pm$29.31 & 2478.25$\pm$9.12     & 2653.42$\pm$36.98 \\ \hline
0.5  & 2643.89$\pm$16.33 & 2584.65$\pm$61.07    & 2647.51$\pm$3.74  \\ \hline
0.6  & 2663.06$\pm$5.87  & 2614.82$\pm$39.37    & 2650.71$\pm$4.11  \\ \hline
0.7  & 2668.01$\pm$6.01  & 2626.62$\pm$20.27    & 2658.41$\pm$9.71  \\ \hline
0.8  & 2678.34$\pm$6.81  & 2650.10$\pm$8.16     & 2669.35$\pm$9.56  \\ \hline
0.9  & 2745.68$\pm$85.92 & 2655.86$\pm$5.14     & 2682.50$\pm$20.27 \\ \hline
\end{tabular}
\break
\caption{Negative Log-likelihood of 3 models without using LS-beta features}
\label{tabl_NLL_WO_LSBeta}
\end{table}

\begin{table}[]
\centering
\begin{tabular}{|c|c|c|c|}
\hline
\textbf{\begin{tabular}[c]{@{}c@{}}Missing\\ Prob.\end{tabular}} & \textbf{EM}        & \textbf{DC} & \textbf{MaxEnt}   \\ \hline

0.0  & 2162.32$\pm$0.00  & 2162.32$\pm$0.00     & 2162.32$\pm$0.00  \\ \hline

0.1  & 2292.91$\pm$46.48  & 2258.39$\pm$62.34    & 2361.82$\pm$29.23 \\ \hline
0.2  & 2452.83$\pm$42.77  & 2379.86$\pm$61.90    & 2493.52$\pm$23.33 \\ \hline
0.3  & 2545.09$\pm$29.06  & 2419.17$\pm$28.45    & 2571.09$\pm$20.18 \\ \hline
0.4  & 2615.28$\pm$20.70  & 2474.93$\pm$24.33    & 2625.09$\pm$22.23 \\ \hline
0.5  & 2668.62$\pm$28.33  & 2528.01$\pm$19.49    & 2683.15$\pm$26.50 \\ \hline
0.6  & 2728.13$\pm$22.93  & 2585.90$\pm$11.48    & 2732.94$\pm$27.95 \\ \hline
0.7  & 2669.20$\pm$6.99   & 2631.07$\pm$17.03    & 2656.76$\pm$8.80  \\ \hline
0.8  & 2676.56$\pm$8.86   & 2652.93$\pm$11.63    & 2668.96$\pm$10.79 \\ \hline
0.9  & 2827.12$\pm$347.75 & 2659.35$\pm$5.79     & 2684.21$\pm$18.56 \\ \hline
\end{tabular}
\break
\caption{Negative Log-likelihood of 3 models using all beta and scale features}
\label{tabl_NLL_all}
\end{table}


\newpage

To further illustrate the performance of models, we show standard errors in the following tables. The standard errors of DC models are presented in table \ref{tabl_std_er_Comp} whereas those of EM models are shown in table \ref{tabl_std_er_EM}. Overall, the standard errors of models with OL are the smallest whereas, the standard errors of models with UT seem to be larger than those of models with other features. It is likely because there are only a few UTs in our dataset.

\begin{table}[]
\centering
\renewcommand{\arraystretch}{1.5}
\label{tab:DC-std-err}
\begin{tabular}{c|ccccc|ccc}
\multirow{2}{*}{Prob} & \multicolumn{5}{c}{Beta}              & \multicolumn{3}{c}{Scale} \\
\cline{2-9}
 & OL    & LT    & UT    & TT    & LS    & OL      & TT     & LS     \\
\hline
0.1                   & 0.038 & 0.033 & 0.331 & 0.113 & 0.096 & 0.007   & 0.030  & 0.036  \\
0.2                   & 0.031 & 0.033 & 0.377 & 0.137 & 0.077 & 0.007   & 0.047  & 0.035  \\
0.3                   & 0.033 & 0.037 & 0.365 & 0.158 & 0.081 & 0.007   & 0.048  & 0.030  \\
0.4                   & 0.032 & 0.040 & 0.351 & 0.173 & 0.086 & 0.007   & 0.048  & 0.038  \\
0.5                   & 0.046 & 0.042 & 0.183 & 0.181 & 0.069 & 0.009   & 0.056  & 0.021  \\
0.6                   & 0.032 & 0.051 & 0.201 & 0.197 & 0.069 & 0.008   & 0.074  & 0.029  \\
0.7                   & 0.042 & 0.075 & 1.926 & 0.426 & 0.600 & 0.010   & 0.145  & 0.033  \\
0.8                   & 0.042 & 0.051 & 1.825 & 0.331 & 0.174 & 0.007   & 0.110  & 0.041  \\
0.9                   & 0.034 & 0.058 & 0.409 & 0.197 & 0.240 & 0.008   & 0.059  & 0.124 \\
\end{tabular}
\caption{Standard errors of DC models}
\label{tabl_std_er_Comp}
\end{table}

\begin{table}[]
\centering
\renewcommand{\arraystretch}{1.5}
\label{tab:em-std-err}
\begin{tabular}{c|ccccc|ccc}
\multirow{2}{*}{Prob} & \multicolumn{5}{c}{Beta}              & \multicolumn{3}{c}{Scale} \\
\cline{2-9}
 & OL    & LT    & UT    & TT    & LS    & OL      & TT     & LS     \\
\hline
0.1                   & 0.037 & 0.025 & 0.199 & 0.086 & 0.089 & 0.006   & 0.031  & 0.027  \\
0.2                   & 0.045 & 0.033 & 0.206 & 0.106 & 0.111 & 0.007   & 0.035  & 0.036  \\
0.3                   & 0.046 & 0.044 & 0.308 & 0.197 & 0.144 & 0.008   & 0.049  & 0.034  \\
0.4                   & 0.049 & 0.062 & 0.218 & 0.361 & 0.159 & 0.008   & 0.098  & 0.100  \\
0.5                   & 0.046 & 0.056 & 0.157 & 0.312 & 0.090 & 0.011   & 0.085  & 0.038  \\
0.6                   & 0.051 & 0.087 & 0.417 & 0.644 & 0.161 & 0.016   & 0.139  & 0.089  \\
0.7                   & 0.060 & 0.056 & 0.231 & 0.241 & 0.307 & 0.014   & 0.115  & 0.049  \\
0.8                   & 0.035 & 0.056 & 0.219 & 0.295 & 0.303 & 0.009   & 0.068  & 0.053  \\
0.9                   & 0.039 & 0.114 & 0.507 & 1.029 & 0.288 & 0.027   & 0.188  & 0.094  \\
\end{tabular}
\caption{Standard errors of EM model}
\label{tabl_std_er_EM}
\end{table}

